\documentclass{article}

\newcommand{\llncs}[1]{}

\usepackage{amsmath,amssymb,graphicx, amsthm}
\usepackage{enumerate}
\usepackage{url}

\newtheorem{theorem}{Theorem}[section]

\newtheorem{definition}[theorem]{Definition}
\newtheorem{lemma}[theorem]{Lemma}

\newcommand{\IEvalT}[2]{I(#1)(#2)} 
\newcommand{\Diam}[1]{{diam(#1)}}

\newcommand{\R}{\mathbb{R}}
\newcommand{\C}{\mathbb{C}}
\newcommand{\N}{\mathbb{N}}
\newcommand{\Z}{\mathbb{Z}}


\newcommand{\sign}{\mathrm{sign\,}}
\renewcommand{\deg}{\mathrm{deg\,}}

\newcommand{\hs}{\hspace*{0.3cm}}

\newcommand{\ud}{\uparrow\downarrow}
\newcommand{\hr}{\hookrightarrow}

\begin{document}
\title{Effective Topological Degree Computation Based on Interval Arithmetic}
\author{Peter Franek and Stefan Ratschan\\
Institute of Computer Science of the\\ Academy of Sciences of the Czech Republic}

   \maketitle
\begin{abstract}
  We describe a new algorithm for calculating the topological degree
  $\deg(f,B,0)$ where $B\subseteq\R^n$ is a product of closed real intervals and
  $f:B\to\R^n$ is a real-valued continuous function given in the form of
  arithmetical expressions. The algorithm cleanly separates numerical from
  combinatorial computation. Based on this, the numerical part provably computes
  only the information that is strictly necessary for the following
  combinatorial part, and the combinatorial part may optimize its computation
  based on the numerical information computed before.  We present
  computational experiments based on an implementation of the algorithm. 
  In contrast to previous work, the algorithm does not assume knowledge of a
  Lipschitz constant of the function $f$, and works for arbitrary continuous
  functions for which some notion of interval arithmetic can be defined.
\end{abstract}

\section{Introduction}
The notion of topological degree was introduced by Jan Brouwer~\cite{Brouwer:1911} 
and was motivated by questions in differential topology~\cite{Milnor:97,Hirsch:76}.  The degree of a continuous function is an integer, describing
some topological properties of it. Degree theory has many applications,
including geometry~\cite{Sauvigny:06}, nonlinear differential equations~\cite{Krawcewicz:97,Milota:07,Mawhin:79,brown2004}, dynamical
systems~\cite{Katok:96}, verification theory~\cite{Kearfott:04}, fixed point
theory~\cite{Cronin:64} and others.

The presented algorithm is able to calculate the degree $\deg(f,B,0)$ of any
real-valued continuous function $f$ defined on a box $B$ such that $0\notin
f(\partial B)$ and $f$ is~given in~the form of arithmetical expressions containing
function symbols for which interval enclosures can be computed~\cite{Moore:09,Rump:10}.
Computational experiments show that for low-dimensional examples of simple functions (up to dimension 10) 
the algorithm terminates in reasonable time.\footnote{The program is accessible on {\tt
    topdeg.sourceforge.net}.} In addition to efficiency, the algorithm has
several advantages over previous work that we now describe in more detail.

The idea of computing the degree algorithmically is not new. 
Since the seventies, many algorithms were proposed and implemented that calculate the degree $\deg(f,B,0)$ of a function $f$ 
defined on a bounded set $B\subseteq\R^n$ via a symbolical expression. 
However, all these methods have various restrictions. 
One of the first such methods was proposed by Erdelsky in 1973~\cite{Erdelsky:73}. His~assumption is that the function
is Lipschitz, with a known Lipschitz constant.
Thomas Neil published another method for automatic degree computation in 1975~\cite{Neil:75}. It is based on the approximation of 
a multidimensional integral of a function derived from $f$ and its partial derivatives.
Here, the error analysis uses only probabilistic methods.
Other authors constructed algorithms that cover the boundary $\partial B$
with a large set of $(n-1)$-simplices 
and use the information about the signs of
$f_j$ on the vertices of these simplices to calculate the degree in a combinatorial way. 
However, the calculated result is proved to be correct only if a parameter is chosen to be
sufficiently large~\cite{Kearfott:79,Stenger:75,Stynes:79}.
Boult and Sikorski developed a different method for degree calculation in the eighties, but their algorithm also requires the knowledge of a Lipschitz constant for $f$~\cite{Boult:86}.
Later, many algorithms arose where the degree was calculated recursively from partial information about $f$ on the boundary $\partial B$.
For example, one has 
$$\deg(f,B,0)=\deg(f_{\neg 1}, U,0),$$ where $f_{\neg 1}=(f_2,\ldots, f_n)$ and $U$ is a $(d-1)$-dimensional open neighborhood of 
$\{x\in\partial B|\, f_{\neg 1}(x)=0,\,\, f_1(x)>0\}$ in $\partial B$. 
Aberth described an algorithm using this formula, based on interval
arithmetic~\cite{Aberth:94}. This method was not implemented and is rather a recipe than a precise algorithm.
Later, Murashige published a method for calculating the degree that uses concepts from computational homology theory~\cite{Murashige:06}.

Although a broad range of ideas and methods for automatic degree computation has been implemented, 
the effectivity of these algorithms decreases fast with the dimension of $B$.
For example, in the Murashige homological method, computation of the degree of the identity function $f(x)=x$
takes more than 100 seconds already in dimension 5~\cite[Figure 3]{Murashige:06}. Other approaches were developed that calculate 
the degree of high-dimensional examples quickly, provided the functions are of some special type. 
For instance, there exist effective degree algorithms for complex functions
$f: \C^n\to\C^n$~\cite{Dian:03,Kearfott:00,Kearfott:04}. 

Our approach is based on a formalization, extension, and implementation of the
rough ideas of Oliver Aberth~\cite{Aberth:94}.  In our setting, we assume that
the function $f$ is real valued and continuous, and it is possible to implement
an interval-valued function which computes box enclosures for the range of $f$
over a box.  We don't require the function to be differentiable and not even
Lipschitz. This enables us to work with algebraic expressions containing
functions such as $\sqrt[3]{x}$, $|x|$ and $x\sin\frac{1}{x}$, but also
with any function $f$ that cannot be defined by algebraic expressions and only an algorithm
is given that computes a superset $J$ of $f(I)$ for any interval $I$ s.t. 
the measure of $J\setminus f(I)$ can be arbitrary small for small intervals $I$. 
Throughout the
paper, we assume that the domain of the function $f$ is a box (product of
compact intervals), but the algorithm works without major changes for more
general domains, such as finite unions of boxes with more complicated topology.
This will be discussed at the end of Section~\ref{sec:proof-correctness}.

From the algorithmic point of view, our algorithm consists of a numerical part,
that provably computes only information that is strictly necessary for
determining the degree, and a combinatorial part that computes the degree from
this information. The separation of those two parts has the advantage that both
can be used and improved independently. The first, numerical part 
covers the boundary of a $d$-dimensional set $\Omega$ with $(d-1)$-dimensional 
regions $D_1,\ldots, D_m$ where 
a particular component $f_l$ of $f$ has constant sign. The combinatorial part recursively gathers the information about the signs of
the remaining components of $f$ on $\partial D_j$. All the sets are represented
as lists of oriented boxes. They do not  
have to represent manifolds and we allow the boundary of these sets to be complicated (see Def. \ref{oriented_cubical_set}).
In this setting, it is computationally nontrivial to identify the boundary $\partial D_j$
of a $d$-dimensional set embedded in $\R^n$ and to decompose the boundary into a sum of ``nice'' sets. Instead of doing this, 
we calculate an ``over-approximation'' of $\partial D_j$ that is algorithmically simpler 
and then prove that it has no impact on the correctness of the result.
This involves some theoretical difficulties whose solution necessitates the
development of several technical results.

Some interest in automatic degree computation is motivated by verification
theory. Methods have been developed for automatic verification of the
satisfiability of a system of $n$ nonlinear equations in $n$ variables, written
concisely as $f(x)=0$, where $f: B\subseteq \R^n\to \R^n$ is a continuous
function.  Most of these methods first find small boxes $K$ that potentially
contain a root of~$f$ and then try to formally prove the existence of a~root
in~such a box $K$~\cite{Miranda:41,Rall:80,Frommer:2004,Frommer:05} using
tests based on theorems such as the Kantorovich theorem, Miranda theorem, or
Borsuk theorem. From those, the test based on Borsuk theorem is the most
powerful~\cite{Alefeld:01a,Frommer:05}. It can be easilly shown that
the assumptions of Miranda theorem imply that $\deg(f,K,0)=\pm 1$ and
the assumption of Borsuk theorem imply that the degree is an odd number.
It~is well known that $\deg(f,K,0)\neq 0$ implies the existence of a root of $f$ in~$K$.  
An efficient test developed by Beelitz can verify that the degree is $\pm 1$, if it is $\pm 1$, 
and hence prove the existence of a solution~\cite{Beelitz:80}. 
By not restricting oneself to degree $\pm 1$ but computing the degree in general, one can prove 
the existence of a~root of $f$ in all cases that are robust in a certain 
sense~\cite{Collins:08b,Franek:12}.

The second section contains the main definitions needed from topological degree
theory---Theorem \ref{thetheorem} is a fundamental ingredient of our algorithm. Section 3
describes the algorithm itself and its connection to Theorem \ref{thetheorem}. 
In Section 4, we present some experimental results. The last section contains the proof of two
auxiliary lemmas that we need throughout the paper. These proofs do not involve
deep ideas but are quite long and technical---hence the separate section at the
end of the paper.

\section{Mathematical Background}

\subsection{Definitions and Notation}
\label{degree_definitions}
In this section, we first summarize the definition and main characteristics of
the topological degree on which there exists a wide range of
literature, such as~\cite{Fonseca:95,Cho:06}.  Degree theory works with continuous
maps between oriented manifolds, and in order to represent these topological
objects on computers we will then introduce Definitions~\ref{boxdef}
to~\ref{def:intcomp}. Finally, the original Theorem~\ref{thetheorem} will be the
main ingredient of our algorithm for computing the topological degree.

Let $\Omega\subseteq\R^n$ be open and bounded, $f:\bar\Omega\to\R^n$ continuous
and smooth (i.e., infinitely often differentiable) in $\Omega$, $p\notin
f(\partial\Omega)$. For regular
values $p\in\R^n$ (i.e., values $p$ such that for all $y\in f^{-1}(p)$, $\det
f'(y)\not= 0$), the degree $\deg(f,\Omega, p)$ is defined to be
\begin{equation}
\label{def_deg}
\deg (f,\Omega,p):=\sum_{y\in f^{-1}(p)} \sign \det f'(y).
\end{equation}

This definition can be extended for non-regular values $p$ in a unique way, such
that for given $f$ and $\Omega$, $\deg(f,\Omega,p)$---as a function in $p$---is locally constant on the connected components of 
$\R^n\setminus f(\partial\Omega)$~\cite{Milnor:97}. 

Here we give an alternative, axiomatic definition, that determines the degree uniquely. 
For any \emph{continuous} function $f: \bar\Omega\to\R^n$ s.t. $0\notin f(\partial\Omega)$
the degree $\deg(f,\Omega,p)$ is the unique integer satisfying the following properties~\cite{Fonseca:95,Cho:06,Furi:2010}:
\begin{enumerate}
\item For the identity function $I$, $\deg(I,\Omega,p)=1$ iff $p$ is in the interior of $\Omega$.
\item\label{item:degsol} If $\deg (f,\Omega, p)\neq 0$ then $f(x)=p$ has a solution in $\Omega$.
\item\label{item:deghomot} If there is a continuous function (a ``homotopy'') $h: [0,1]\times\bar\Omega\to\R^n$ such that
 $p\notin h([0,1]\times\partial\Omega)$, then $\deg(h(0, \cdot),\Omega,p)=\deg(h(1, \cdot),\Omega,p)$.
\item If $\Omega_1\cap\Omega_2=\emptyset$ and $p\notin f(\bar\Omega\setminus(\Omega_1\cup\Omega_2))$,
then $\deg(f,\Omega,p)=\deg(f,\Omega_1,p)+\deg(f,\Omega_2,p)$.
\item For given $f$ and $\Omega$, $\deg(f,\Omega,p)$---as a function in $p$---is constant on any connected component of $\R^n\backslash f(\partial\Omega)$.
\end{enumerate}

This can be generalized to the case of a continuous function $f:M\to N$, where $M$ and $N$ are oriented manifolds of the same dimension
and $M$ is compact. If $f$ is smooth, $f'(y)$ denotes the matrix of partial 
derivatives of some coordinate representation of $f$ and formula $(\ref{def_deg})$ is still meaningful.   
For example, if $f$ is a scalar valued function from
an oriented curve $c$ (i.e., an oriented set of dimension $1$) to $\R$ and
$f\neq 0$ on the endpoints of $c$, then $\deg(f,c,0)$ is well defined. 
If $f: M\to N$ is a function between two oriented manifolds without boundary, then the degree $\deg(f)$ is defined to be $\deg(f,M,p)$ for any $p\in f(M)$.

A simple consequence of the degree axioms is that for a continuous $f: \bar\Omega\subseteq\R^n \to \R^n$, 
$p\notin f(\partial\Omega)$ implies that $\deg(f,\Omega,p)=\deg(f-p,\Omega,0)$. 
So we will be only interested in calculating $\deg(f,\Omega,0)$.

We will represent geometric objects like manifolds, orientation, boundaries and functions in a combinatorial way, using
the following definitions.

\begin{definition}
\label{boxdef}
A $k$-dimensional box (simply $k$-box) in $\R^n$ is the product of $k$
non-degenerate closed intervals and $n-k$ degenerate intervals (one-point
sets). A \emph{sub-box} of a $k$-box $A$ is any $k$-box $B$ s.t. $B\subseteq
A$. The \emph{diameter} $\Diam{B}$ of a box $B$ is the width of its widest interval.
\end{definition}

\begin{definition}
\label{def:oriented_box}
The \emph{orientation} of a $k$-box is a number from the set $\{1, -1\}$. 
An \emph{oriented box} is a pair $(B,s)$ where $B$ is a box and $s$ its orientation.
We say that $B_1$ is an \emph{oriented sub-box} of an oriented box $B$, if $B_1\subseteq B$, 
the dimensions of $B$ and $B_1$ are equal and the orientations are equal.
\end{definition}

\begin{definition}
\label{boundary_orientation} 
Let $B=I_1\times I_2\times \ldots \times I_n$ be an oriented $d$-box in $\R^n$
with orientation $o$. Let, for every $i\in \{1,\dots,n\}$, $[a_i,b_i]=I_i$.
Assume that the intervals $I_{j_1},\ldots I_{j_d}$ are non-degenerate,
$j_1<j_2<\ldots <j_d$, the other
intervals are degenerate (one-point) intervals. 
For $i\in\{1,\dots, d\}$, 
the $(d-1)$-dimensional boxes
$$F_i^-:=\{(x_1,\dots,x_n)\in B\,|\, x_{j_i}=a_{j_i}\}\quad {\rm and}\quad
F_i^+:=\{(x_1,\dots,x_n)\in B\,|\, x_{j_i}=b_{j_i}\}$$ are called \emph{faces} of $B$.
Any sub-box of a face is called a \emph{sub-face} of $B$. 
If we choose the orientation of 
$F_i^+$ to be $(-1)^{i+1} o$ and the orientation of $F_i^-$ to be $(-1)^{i} o$, then we call $F_i^\pm$
\emph{oriented faces} of $B$. An oriented sub-box of an oriented face is called
\emph{oriented sub-face}. The orientation of the oriented faces and sub-faces is called the \emph{induced orientation} from the orientation of $B$.
\end{definition}

\begin{definition}
\label{oriented_cubical_set}
An \emph{oriented cubical set} $\Omega$ is a finite set of oriented boxes
$B_1,\ldots, B_k$ of the same dimension $d$ such that the following conditions
are satisfied:
\begin{enumerate}
\item For each $i\neq j$, the dimension of  $B_i\cap B_j$ is at most $(d-1)$.
\item Whenever $B_i\cap B_j=B_{ij}$ is
a $(d-1)$-dimensional box, then the orientations of $B_i$ and $B_j$ are compatible. 
This means that $B_{ij}$ has an opposite induced orientation as a sub-face of $B_i$ as the orientation
induced from $B_j$.
\end{enumerate}
The dimension of an oriented cubical set is the dimension of any box it contains. If $\Omega$ is an oriented cubical set, we denote by $|\Omega|$ 
the set it represents (the union of all the oriented boxes contained in $\Omega$).
\end{definition}

An oriented cubical set is sketched in Figure~\ref{fig:oc}. An immediate
consequence of the definition is that each sub-face 
$F$ of a box $B$ in an oriented cubical set $\Omega$ is a boundary sub-face of at most two boxes in $\Omega$.
\begin{figure}[htbp]
\centering
\includegraphics[width=6cm]{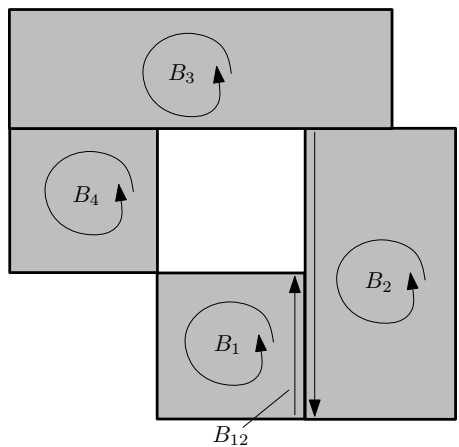}
\caption{Two-dimensional oriented cubical set, union of four oriented
     boxes. The boundary face $B_{12}$, for example, has opposite orientation induced from the box $B_1$ and from $B_2$.}
\label{fig:oc}
\end{figure}
Note that an oriented cubical set does not have to represent a manifold, because some
boxes may have lower-dimensional intersection, like $B_1$ and $B_4$ in Figure
$\ref{fig:oc}$. 
%
%
\begin{definition}
\label{def:oriented_boundary}
An \emph{oriented boundary} of an oriented $d$-dimensional cubical set $\Omega$
is any set of $(d-1)$-dimensional oriented boxes $\partial\Omega$, such that
\begin{enumerate}
\item\label{cond:1} Any two boxes in $\partial\Omega$ have intersection of dimension at most $d-2$.
\item\label{cond:2} For each $F_\partial\in\partial\Omega$, and each
  $(d-1)$-dimensional sub-box $F'$ of $F_\partial$, there exists \emph{exactly one} box $B\in\Omega$ 
such that $F'$ is an oriented sub-face of $B$. 
\item $\partial\Omega$ is maximal, that is, no further box can be added to $\partial\Omega$ such that conditions \ref{cond:1} and \ref{cond:2} still hold.
\end{enumerate}
\end{definition}
An oriented cubical set and its oriented boundary are sketched in Figure~\ref{fig:ob}. Geometrically, this definition describes the topological boundary of an oriented cubical set $\Omega$ and
we denote the union of all oriented boxes in $\partial\Omega$ by $|\partial\Omega|$. 
Clearly, $\partial |\Omega|=|\partial\Omega|$, the meaning of the left hand side
being the topological boundary of the set $|\Omega|$. Note that if 
$\Omega$ is a $d$-dimensional oriented cubical set and $\partial\Omega$
an oriented boundary of $\Omega$,
then each sub-face $x$
of some box in $\Omega$ s.t. $x\cap |\partial\Omega|$ is at most 
$(d-2)$-dimensional, is a sub-face of exactly two boxes in $\Omega$ with opposite induced orientation (see $B_{12}$
in Figure~\ref{fig:oc}).

An oriented boundary of an oriented cubical set does not have to form an oriented cubical set, because 
the second condition of Definition $\ref{oriented_cubical_set}$ may be violated
(for a counter-example, see Fig. \ref{fig:lemma1} where the $1$-boxes $a$ and $c$ have 
$0$-dimensional intersection but not compatible orientations).

\begin{figure}[htbp]
  \centering
   \includegraphics[width=6cm]{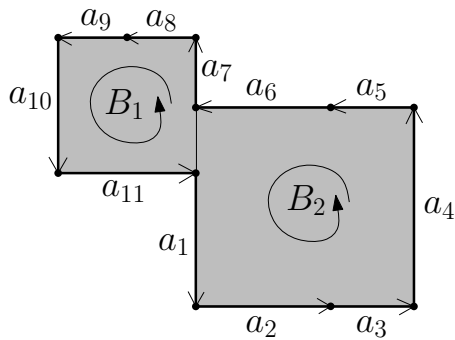}
   \caption{Oriented cubical set $\Omega=\{B_1, B_2\}$ with orientation of both boxes 1 and its oriented boundary $\partial\Omega=\{a_1,\ldots, a_{11}\}$ with orientation
indicated by the arrows. For example, $a_1$ has orientation $-1$ (the arrow goes in the opposite direction as the vertical axis), $a_2$ has orientation $1$ etc.}
   \label{fig:ob}
\end{figure}

The notion of topological degree can be naturally generalized to oriented cubical sets. So, if $f$ is a continuous function from a $d$-dimensional oriented cubical set $\Omega$ 
to $\R^d$ such that $0\notin f(\partial |\Omega|)$, then $\deg(f,\Omega,0)$ is well-defined, extending the definition of $\deg(f,\Omega,0)$ for oriented manifolds $\Omega$.
\footnote{For an oriented cubical set $\Omega$, one can define an oriented manifold $\Omega^\epsilon:=\{x\in int(|\Omega|)\,|\,dist(x,\partial\Omega)\geq \epsilon\}$ for a small enough $\epsilon$
and define the degree to be $\deg(f,\Omega^\epsilon,0)$}

Finally, we will represent functions as algorithms that can calculate 
a superset of $f(B)$ for any given box $B$. 
\begin{definition}
\label{def:intcomp}
Let $\Omega\subseteq\R^n$. We call a function $f:\Omega\rightarrow\R$ \emph{interval-computable} if there
exists a corresponding algorithm $I(f)$ that, for a given box $B\subseteq\Omega$ with rational endpoints and positive diameter,
computes a closed (possibly degenerate) interval $\IEvalT{f}{B}$ such that 
\begin{itemize}
\item $I(f)(B)\supseteq \{ f(x) \mid x\in B\}$, and 
\item for every $\varepsilon>0$ there is a $\delta>0$ such that for every box $B$
  with $0<\Diam{B}<\delta$, $\Diam{\IEvalT{f}{B}}<\varepsilon$.
\end{itemize}
We call a function  $f=(f_1,\ldots,f_n):\Omega\rightarrow\R^n$ interval-computable iff each $f_i$ is interval-computable. 
In this case, the algorithm $I(f)$ returns a
tuple of intervals, one for each $f_i$.
\end{definition}

Usually such functions are written in terms of symbolic expressions containing
symbols denoting certain basic functions such as rational constants, addition,
multiplication, exponentiation, trigonometric function and square root. Then, $I(f)$ can be computed from the
expression by interval arithmetic~\cite{Neumaier:90,Moore:09}. The interval literature usually
calls an interval function fulfilling the first property of Definition~\ref{def:intcomp}
``enclosure''. Instead of the second property, it often uses a slightly stronger notion of an
interval function being ``Lipschitz continuous''~\cite[Section 2.1]{Neumaier:90}.
We will use interval computable functions and expressions
denoting them interchangeably and assume that for an expression denoting a
function $f$, a corresponding algorithm $I(f)$ is given.

\subsection{Main Theorem}

Now we define the combinatorial information we use to compute the degree, and prove
that it is both necessary and sufficient for determining the degree.

\begin{definition}
A~$d$-dimensional \emph{sign vector} is a vector from $\{ -,0, +\}^d$. 

Let $S$ be a set of oriented $(d-1)$-boxes.  A \emph{sign covering} of $S$ is an assignment of a $d$-dimensional sign vector
to each $a\in S$. For a sign covering $SV$ and $a\in S$ we will denote
this sign vector by $SV_a$, and its $i$-th component by $(SV_a)_i$.

A sign covering is \emph{sufficient} if each sign vector contains at least one
non-zero element.

A sign covering is a \emph{sign covering wrt. a function}  $f: (\cup_{a\in S}\, a) \to\R^{d}$ 
with components $(f_1,\ldots, f_{d})$,  if for every oriented box $a\in S$
and for every $i\in\{1,\dots,d\}$, $(SV_a)_i\neq 0$ implies that $f_i$ has constant sign $(SV_a)_i$ on $a$.
\end{definition}

In the following we will often recursively reduce proofs/algorithms for $d$-dimension
oriented cubical sets, to proofs/algorithms on their oriented
boundary. Since---as we have already seen---an oriented boundary 
of an oriented cubical set does not necessarily have to form an oriented cubical
set, we will need the following lemma that will allow us to decompose this oriented
boundary again into oriented cubical sets:

\begin{lemma}
\label{lemma:1}
Let $\Omega$ be a~$d$-dimensional oriented cubical set, $\partial\Omega$ an oriented
boundary of $\Omega$, $SV$ a sufficient sign-covering of $\partial \Omega$
with respect to $f:|\Omega|\to\R^d$ and assume that for each $a\in\partial \Omega$, $SV_a$ has exactly one nonzero component. 
Let $\Lambda_{l',s'}:=\{a\in\partial \Omega\,|\,(SV_a)_{l'}=s'\}$ for each
$l'\in\{1,\ldots,d\}$ and $s'\in\{+,-\}$. 
Then there exist oriented cubical sets $D_1,\ldots, D_m$ and corresponding
oriented boundaries $\partial D_1,\dots,\partial D_m$ s.t. the following conditions are satisfied:
\begin{enumerate}
\item\label{lm:1} $\cup_{j\in\{1,\dots,m\}} D_j=\partial\Omega$,
\item\label{lm:2} $D_i\cap D_j=\emptyset$ for $i\neq j$,
\item\label{lm:3} For each $i$, there exists $l(i), s(i)$ such that $D_i\subseteq\Lambda_{l(i),s(i)}$,
\item\label{lm:4} Each $b\in \partial D_i$ is a sub-face of some
  $a\in\Lambda_{l',s'}$ where $l'\neq l(i)$.
\end{enumerate}
\end{lemma}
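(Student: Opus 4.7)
My plan is to first partition $\partial\Omega$ using the sign covering and then to refine each piece into orientation-compatible chunks.

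First, since $SV$ is sufficient and each $SV_a$ has exactly one nonzero entry, the sets $\Lambda_{l',s'}$ for $(l',s')\in\{1,\ldots,d\}\times\{+,-\}$ form a disjoint union equal to all of $\partial\Omega$. This already yields a natural coarse decomposition indexed by a dominant coordinate and its sign, and automatically provides the labels $l(i),s(i)$ demanded by condition \ref{lm:3}. However, each $\Lambda_{l',s'}$ need not itself satisfy Definition \ref{oriented_cubical_set}: as illustrated at the end of Section \ref{degree_definitions}, two $(d-1)$-boxes sharing a $(d-2)$-face can carry incompatible induced orientations, so we must refine further.

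Second, inside each $\Lambda_{l',s'}$ I would form a graph whose vertices are its boxes and whose edges join pairs sharing a $(d-2)$-face on which the induced orientations are compatible in the sense of condition 2 of Definition \ref{oriented_cubical_set}. I take the $D_i$ to be the connected components of this graph across all $(l',s')$. Conditions \ref{lm:1} and \ref{lm:2} of the lemma then hold by construction, and condition 1 of Definition \ref{oriented_cubical_set} is inherited from the corresponding condition on $\partial\Omega$ (whose boxes pairwise intersect in at most dimension $d-2$).

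The main obstacle is verifying condition 2 of Definition \ref{oriented_cubical_set} for each $D_i$: whenever two boxes of the same component intersect in a $(d-2)$-face, their orientations must be compatible, not just for directly joined edges. The key tool is the $\partial^2 = 0$ identity inherited by $\partial\Omega$ from the fact that it is the oriented boundary of a $d$-dimensional oriented cubical set: at every $(d-2)$-face $b$ sitting in the interior of $\partial\Omega$, the induced orientations of the incident $(d-1)$-boxes of $\partial\Omega$ must sum to zero. At a $(d-2)$-face incident to more than two boxes, this requires a careful pairing argument, consistently grouping incident boxes into compatible pairs at each branching face, so that no bad cycle of incompatibilities can form inside a single component; I expect this pairing step to be the technical heart of the proof, and it is likely the reason the paper defers Lemma \ref{lemma:1} to the final technical section.

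Finally, for each $D_i$ one constructs an oriented boundary $\partial D_i$ by applying Definition \ref{def:oriented_boundary} to $D_i$ in place of $\Omega$. Property \ref{lm:4} then follows from the same $\partial^2=0$ parity argument: if $b\in\partial D_i$ is a $(d-2)$-sub-face of some $a\in D_i\subseteq \Lambda_{l(i),s(i)}$, the induced orientation of $a$ at $b$ must be cancelled by a contribution from some other box $a'\in\partial\Omega$ incident to $b$; by the construction of the components, $a'$ cannot belong to $D_i$ (otherwise $b$ would be an interior $(d-2)$-face of $D_i$, not a boundary one), and the pairing from the previous paragraph forces at least one such witness to carry a label $l'\neq l(i)$, giving condition \ref{lm:4}.
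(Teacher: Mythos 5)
Your coarse partition into the classes $\Lambda_{l',s'}$ and your identification of orientation compatibility at codimension-two faces as the crux are both right, but the construction you actually propose --- taking the $D_i$ to be the connected components of the graph whose edges join boxes sharing a $(d-2)$-face with compatible induced orientations --- fails exactly at the branching faces you flag. If a $(d-2)$-face $b$ is shared by four boxes of the same $\Lambda_{l,s}$, two inducing each orientation of $b$ (this happens, e.g., when two $d$-boxes of $\Omega$ meet only along a $(d-2)$-face, as $B_1$ and $B_4$ do in Figure~\ref{fig:oc}), then every ``$+$'' box is compatibly adjacent to every ``$-$'' box, so all four land in one connected component; but the two boxes inducing the \emph{same} orientation of $b$ then violate condition 2 of Definition~\ref{oriented_cubical_set}, so your $D_i$ is not an oriented cubical set. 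The ``careful pairing'' you invoke would have to \emph{split} such a cluster rather than merge it, so it cannot coexist with the connected-components definition, and you give no way to choose it; that choice is the actual content of the lemma. The paper instead grows each $D_k$ box by box and, at a boundary face $x$ of the current $D_k$ not yet adjacent to a box of another class, determines the partner of $a_0\hookrightarrow B_1$ by rotating around $x$ through a chain $B_1,\dots,B_p$ of $d$-boxes of $\Omega$ glued along $(d-1)$-faces, landing on a uniquely determined $-a_p\in\partial\Omega$ that is shown to lie in $\Lambda_{l,s}$ with orientation compatible with $a_0$; the termination and reversibility of this walk are what make the pairing well defined. An arbitrary matching of $+$'s with $-$'s at each branching face would not obviously avoid orientation conflicts at other shared faces of the same component.

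Your argument for condition~\ref{lm:4} has two further holes. First, once the components come from a pairing rather than from full compatibility, a face $b\in\partial D_i$ can be incident to another box of $\Lambda_{l(i),s(i)}$ lying in a \emph{different} component, so ``$a'\notin D_i$'' does not yet produce a witness outside $\Lambda_{l(i),s(i)}$; the paper avoids this by construction, continuing to enlarge $D_k$ (via the walk above) until every face of $\partial D_k$ is a sub-face of a box from a different class. Second, even a witness outside $\Lambda_{l(i),s(i)}$ could a priori lie in $\Lambda_{l(i),-s(i)}$, which still has $l'=l(i)$; excluding this requires the separate observation that $f_{l(i)}$ cannot have sign $s(i)$ on $|D_i|$ and sign $-s(i)$ on an adjacent box containing $b\subseteq|D_i|$, a step your sketch never makes.
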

The lemma is illustrated in Figure \ref{fig:lemma1}. The proof of this lemma is technical and we postpone it to the appendix in order to keep the text fluent.
\begin{figure}[htbp]
  \centering
   \includegraphics[width=6cm]{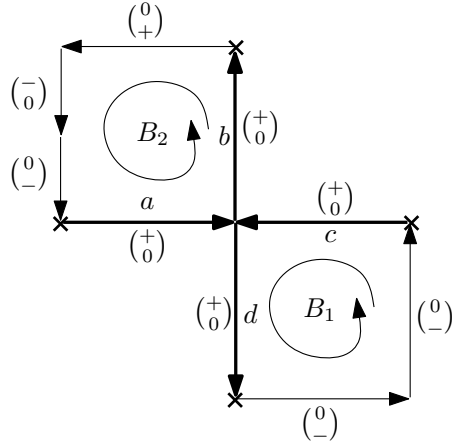}
   \caption{Illustration of Lemma \ref{lemma:1}. The boundary of the oriented cubical set $\{B_1,B_2\}$ contains nine boxes and
$\Lambda_{1,+}=\{a,b,c,d\}$ are the boundary boxes with sign vector $+\choose 0$.
This can be decomposed into oriented cubical sets $D_1=\{a,b\}$ and $D_2=\{c,d\}$. 
The boundaries of $D_1$ and $D_2$ consist of the points marked as $\times$ and each of them is a sub-face of some box
with sign vector different from $+\choose 0$.}
   \label{fig:lemma1}
\end{figure}
\begin{theorem}
\label{thetheorem}
Let $\Omega$ be an oriented $d$-dimensional cubical set, $\partial\Omega$ an oriented boundary of $\Omega$ and  
$f: |\Omega|\to \R^d$ a continuous function with components
$(f_1,\ldots, f_{d})$ such that $0\notin f(|\partial \Omega|)$.
Then a sign covering $SV$ of $\partial\Omega$ wrt. $f$ determines the degree
$\deg(f, \Omega, 0)$ uniquely if and only if it is sufficient.
\end{theorem}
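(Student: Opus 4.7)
The plan is to prove both directions of the equivalence separately. The forward direction (sufficient implies uniquely determined) is the one actually used by the algorithm and will follow from a straight-line homotopy argument; the converse is a lower-bound statement and will be handled by a local perturbation near a box whose sign vector is all zeros.

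For the direction ``sufficient implies determined'', I would take two continuous maps $f,g:|\Omega|\to\R^d$ that are both compatible with $SV$ and both nonvanishing on $|\partial\Omega|$, and form the linear homotopy $h(t,x):=(1-t)f(x)+tg(x)$. For any $x\in|\partial\Omega|$ pick some $a\in\partial\Omega$ with $x\in a$; sufficiency supplies an index $i$ with $(SV_a)_i\neq 0$, and by the definition of ``sign covering wrt.\ a function'' both $f_i(x)$ and $g_i(x)$ carry the sign $(SV_a)_i$. Then the $i$-th coordinate of $h(t,x)$ is a positive convex combination of two nonzero reals of the same sign, hence nonzero, so $0\notin h([0,1]\times|\partial\Omega|)$. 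Homotopy invariance (axiom~3 of the degree) then gives $\deg(f,\Omega,0)=\deg(g,\Omega,0)$.

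For the converse I would argue by contraposition: assume $SV$ is not sufficient and exhibit two continuous maps $f,g:|\Omega|\to\R^d$, both compatible with $SV$ and both nonvanishing on $|\partial\Omega|$, whose degrees differ. Pick $a\in\partial\Omega$ with $SV_a=(0,\ldots,0)$, and let $B\in\Omega$ be a $d$-box having (a sub-face of) $a$ as a face; such a $B$ is supplied by condition~2 of Definition~\ref{def:oriented_boundary}. Start from any $f$ compatible with $SV$ (if no such $f$ exists the statement is vacuous) and perturb it only inside a small half-tubular neighbourhood $N\subset B$ of the relative interior of $a$, chosen so that $N$ meets no other box of $\partial\Omega$. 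Because $SV_a$ imposes no sign constraint on $a$, the value of $g$ on $a$ may be any continuous $\R^d\setminus\{0\}$-valued map agreeing with $f$ on $\partial a$ for continuity; since $\pi_{d-1}(S^{d-1})=\Z$ I can choose such a $g|_a$ that realises a different relative homotopy class from $f|_a$. Additivity of the degree (axiom~4 applied to $\Omega = (\Omega\setminus\mathrm{int}(N))\cup N$) then equates $\deg(g,\Omega,0)-\deg(f,\Omega,0)$ to the local change $\deg(g,N,0)-\deg(f,N,0)$, which is nonzero by construction.

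The main obstacle is this converse direction, and specifically the explicit construction of the perturbation: one must exhibit $g$ so that (i)~it coincides with $f$ outside $N$, so that all sign constraints of $SV$ on boxes $a'\neq a$ are preserved and $g$ remains continuous, (ii)~it stays nonzero on $|\partial\Omega|$, and (iii)~it alters the local degree on $N$ by a prescribed nonzero integer. This reduces to a standard extension problem for maps $D^{d-1}\to S^{d-1}$ with prescribed boundary data, solved for instance by composing $f|_a$ inside a tiny disk in the interior of $a$ with a map of degree one onto $S^{d-1}$ and extending radially into $N$. Everything else is bookkeeping with the footnoted interior-contraction definition of the degree of a cubical set and with the additivity axiom.
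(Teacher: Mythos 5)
Your proof is correct in both directions, but the forward direction takes a genuinely different route from the paper. The paper proves ``sufficient $\Rightarrow$ determined'' constructively: it invokes Lemma~\ref{lemma:1} to decompose $\partial\Omega$ into oriented cubical sets $D_j$ on which a fixed component $f_{l(j)}$ has constant sign, and then applies the dimension-reduction formula (\ref{degreeformula}) recursively, so that the proof simultaneously yields the recursion on which the whole algorithm is built. Your straight-line homotopy $h(t,x)=(1-t)f(x)+tg(x)$, whose nonvanishing on $[0,1]\times|\partial\Omega|$ is certified coordinatewise by the shared nonzero sign, is shorter, avoids Lemma~\ref{lemma:1} entirely, and is a clean abstract proof of determination---but it produces no procedure for computing the common degree, which is why the paper cannot dispense with its longer argument. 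For the converse, both proofs share the same core idea (modify the map on a box with all-zero sign vector, exploiting $\pi_{k}(S^{d-1})=0$ for $k<d-1$ and $\pi_{d-1}(S^{d-1})=\Z$); the paper works directly with the degree of the normalized boundary map $\tilde f=f/|f|:|\partial\Omega|\to S^{d-1}$, splitting it over $S_1=\partial\Omega\setminus F^\circ$ and $F$, whereas you localize via the additivity axiom on a half-tubular neighbourhood $N$. Your version needs two small observations to be airtight: first, $N$ must be taken thin enough that $f\neq 0$ on $\overline{N}$ (this follows from compactness of $|\partial\Omega|$ and continuity of $f$, and also gives $\deg(f,N,0)=0$), since otherwise additivity cannot be applied across the inner boundary of $N$; second, the case $d=1$, where $\pi_{0}(S^{0})$ is not $\Z$, should be checked separately (it is trivial: flipping the sign of the value at the zero-sign-vector endpoint changes the degree). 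Neither point is a real gap.
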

\begin{proof}
We first prove that sufficiency of the sign covering implies a unique degree.
We proceed by induction on the dimension of $\Omega$.
If $\Omega$ is a~$1$-dimensional oriented cubical set $\overrightarrow{ab}$, then $\deg(f,\Omega,0)=\frac{1}{2}\,(\sign(f(b))-\sign(f(a)))$
is determined by the sufficient sign covering of $\partial\Omega$ wrt. $f$. 
Let $d>1$.
For each box $a\in\partial\Omega$, choose an index $i(a)$ such that $(SV_a)_{i(a)}=:s_a\neq 0$. 
For all $l'\in\{1,\ldots,d\}$ and $s'\in\{+,-\}$, let $\Lambda_{l',s'}:=\{a\in \partial\Omega\,|\,i(a)=l',\,s_a=s'\}$.
It follows from Lemma~\ref{lemma:1} that 
we may decompose $\partial\Omega$ into oriented cubical sets $D_j$ and oriented boundaries $\partial D_j$, $j=1,\ldots, m$ 
such that $D_i\subseteq\Lambda_{l(i),s(i)}$ for unique $l(i), s(i)$ and each $x\in\partial D_i$ is a sub-face
of some $b\in\Lambda_{l',s'}$ where $l'\neq l(i)$.
For each $l'$, define $f_{\neg l'}:=(f_1,\ldots, f_{l'-1}, f_{l'+1},\ldots, f_n)$.
Then $0\notin f_{\neg l(i)}(|\partial D_i|)$ and the degree $\deg(f_{\neg l(i)},D_i,0)$ is defined. 
Let $l\in\{1,\ldots, d\}$ and $s\in\{+,-\}$ be arbitrary.
It follows from \cite[Theorem 2.2]{Kearfott:00} and \cite[Section 4.2]{Stenger:75} that
\begin{equation}
\label{degreeformula}
\deg (f, \Omega, 0_d)=
s\,(-1)^{l+1}\sum_{i; \,l(i)=l\,\,\text{and}\,\,s(i)=s} \deg(f_{\neg l}, D_i, 0_{d-1})
\end{equation}
where $0_k\in\R^k$ is the $k$-dimensional zero.

For each set $D_i$ from the sum on the right hand side, $f_{l(i)}$ has sign $s(i)$ on $D_i$.
Each $x\in\partial D_i$ is a sub-box of some $b\in\Lambda_{l',s'}$ where $l'\neq l(i)$, 
so we may assign a new sign vector for $x$ 
by deleting the $l(i)$-th component from $SV_b$. In this way, we define a sufficient sign covering of $\partial D_i$ wrt. $f_{\neg l(i)}$
and the degree $\deg(f,B,0)$ can be then calculated
recursively using $(\ref{degreeformula})$.

Now assume that the sign covering of $\partial\Omega$ is not sufficient. We will prove that in this case, the degree is not uniquely determined.

Let $F\in \partial\Omega$ be a $(d-1)$-dimensional box such that $SV_F=(0,\ldots,0)$. Choose $m\in\Z$ to be arbitrary. 
We will construct a function $G: |\Omega|\to\R^d$
such that the sign covering of $\partial\Omega$ is a sign covering with respect to $G$ and $\deg(G,\Omega,0)=m$.

Denote the oriented manifold with boundary $\partial\Omega\setminus F^\circ$ by $S_1$. $\partial\Omega$ 
is a union of the oriented manifolds $S_1$ and $F$,
the boundaries $\partial F$ and $\partial S_1$ are equal with opposite orientations, homeomorphic to the sphere $S^{d-2}$. 
The degree $\deg(f,\Omega,0)=\deg(\tilde{f})$ where $\tilde{f}=f/|f|: \partial \Omega\to S^{d-1}\subseteq \R^d$ is a map to the sphere.
Let $p\in S^{d-1}$ be such that $p\notin \tilde{f}(\partial S_1)$, let $\alpha=\deg(\tilde{f},S_1,p)$ and $m'=m-\alpha$. 
We construct a map $g: F\to S^{d-1}$ such that $\deg(g,F,p)=m'$. 
The homotopy group $\pi_k(S^l)=0$ for $k<l$, so each map
from a $(d-2)$-sphere to the $(d-1)$-sphere is homotopic to a constant map.
Let us define $g_1=\tilde{f}$ on $\partial F\simeq S^{d-2}$.  Then $g_1:\partial F\to S^{d-1}$ 
is homotopic to a constant map.
There exists a sub-box $F'\subseteq F$ and a continuous extension $g_2:F\setminus (F')^\circ\to S^{d-1}$ of $g_1$ such that
$g_2=g_1=\tilde{f}$ on $\partial F$ and $g_2$ is constant on $\partial F'\simeq S^{d-2}$.
Using the fact that $\pi_{d-1}(S^{d-1})=\Z$, there exists a map $h:S^{d-1}\to S^{d-1}$ of degree $m'$.
It follows from the identity $S^{d-1}\simeq F'/\partial F'$ that we can extend $g_2$ to a map $g_3: F\to S^{d-1}$
such that $\deg(g_3,F,p)=m'$.  Finally, extend $g_3$ to a map $g:\partial \Omega\to S^{d-1}$ by $g=\tilde{f}$ on $S_1$. 
Then 
$$
\deg(g)=\deg(g,S_1,p)+\deg(g,F,p)=\alpha+m'=m.
$$
Let $i: S^{d-1}\hookrightarrow \R^d$ be the inclusion.
Multiplying $i\circ g$ by some scalar valued function, we can obtain a function $g':\partial \Omega\to \R^d$ such that $g'=f$ on $\partial\Omega$. Extending
$g': \partial \Omega\to \R^d$ to a continuous $G:\Omega\to\R^d$ arbitrarily 
(this is possible due to Tietze's Extension Theorem~\cite[Thm. 4.22]{Bruckner:1997},\cite{Tietze:1915})
we obtain a function $G$ such that the original sign covering is a sign covering of 
$\partial\Omega$ wrt. $G$ and $\deg(G,\Omega,0)=m$. This completes the proof.
\end{proof}

\section{Algorithm description}
\subsection{Informal Description of the Algorithm}
We describe now our algorithm for degree computation of an interval computable
function. If $f: B\to\R^n$ is an interval computable function nowhere zero on the boundary $\partial B$,
then the corresponding interval computation algorithm $I(f)$ from Definition~\ref{def:intcomp}
may be used to construct a sufficient sign covering of $\partial B$ wrt. $f$. 
This sign covering will be represented as a \emph{list} of oriented boxes and sign vectors. 
The main ingredient of the algorithm is Equation
$(\ref{degreeformula})$ from the proof of Theorem~\ref{thetheorem}. 
For some index $l$ and sign $s$, we select all the boxes 
$a$ with $(SV_a)_l=s$. From Lemma~\ref{lemma:1}, we know that these boxes form
some oriented cubical sets $D_1,\ldots, D_m$. 
Then a new list of $(n-2)$-dimensional oriented boxes is constructed that covers the boundaries $\partial D_j$ of $D_j$.
Possibly subdividing boxes in this new list, we assign $(n-1)$-dimensional sign vectors to its elements in such a way that 
we obtain a sufficient sign covering of $\cup_j \partial D_j$ wrt. $f_{\neg l}:=(f_1,\ldots,f_{l-1}, f_{l+1},\ldots, f_n)$. 
Equation $(\ref{degreeformula})$ is used for a recursive dimension reduction.

We work with lists of oriented boxes and sign vectors rather than with sets, 
because it will be convenient for our implementation to allow an oriented box to
be contained in a~list multiple times. However, we will usually ignore the
order of the list elements (i.e., the algorithm actually is based on
multi-sets which we implement by lists).
For two lists $L_1$ and $L_2$, we denote by $L_1+L_2$ the concatenation of $L_1$
and $L_2$ and will also use the symbol $\sum$ for the concatenation of several lists.
We use the notation $a\in L$ if $a$ is contained in $L$ at least once. If $L_1$ is a sub-list of $L$, 
we denote by $L-L_1$ 
the list $L$ with the sub-list $L_1$ omitted. 

Now we define a version of the notion of sign covering based on lists:
\begin{definition}
\label{def:signcover}
A \emph{sign list} (of \emph{dimension} $d$) is a list of pairs consisting of
\begin{itemize}
\item an oriented $d$-box, and
\item a corresponding $(d+1)$-dimensional sign vector.
\end{itemize}

A sign list is \emph{sufficient}, iff each sign vector contains at least one non-zero element.
A sign list of dimension $d$ is a 
\emph{sign list wrt. a function} $f: \bigcup_{a\in L}\, a\,\rightarrow\mathbb{R}^{d+1}$
iff for each element $a\in L$ and corresponding sign vector
$SV_a=(s_1,\dots,s_{d+1})$, for all $i\in \{1,\dots,d+1\}$, $s_i\neq 0$ implies that
$f_i$ has sign $s_i$ on $a$.
\end{definition}

By misuse of notation, we will sometimes refer to the elements of a sign
list as pairs consisting of an oriented box and a sign vector, and sometimes just as an oriented box. 

The basic ingredient of the algorithm is a recursive function $\rm{Deg}$ with input
a sufficient sign list and output an integer. This function involves no interval arithmetic 
and is purely combinatorial. For an input that is a sufficient sign list $L$ wrt. $f$
such that the boxes in $L$ form an oriented boundary of an oriented cubical set $\Omega$, 
this function returns $deg(f,\Omega,0)$.
If the $\rm{Deg}$ function input is a $0$-dimensional sign list $L$, then the output 
$\sum_{p\in L} \frac{orientation(p)\times s_p}{2}$ is returned. This is
compatible with the the formula for the degree of a function on an oriented edge,
$\deg(f,\overrightarrow{ab},0)=\frac{\sign f(b)-\sign f(a)}{2}$.

If the input consists of oriented $d$-boxes and sign vectors of dimension $d+1$
for $d>0$, we choose $l\in \{1,2,\ldots, d+1\}$ and $s\in \{+,-\}$ and compute a
list of boxes $L^{sel}$ (the \emph{selected boxes}) having $s$ as the $l$-th component of the
sign vector.  We split the boundary faces of all selected
boxes until each face $x$ of a selected box $a$ is \emph{either} contained in some non-selected box \emph{or} the intersection
of $x$ with each non-selected box is at most $(d-2)$-dimensional. 
For each face $x$ of a selected box $a$ that is a sub-face of some non-selected box $b$, 
we delete the $l$-th entry from the sign vector of $b$ and assign this as a new sign vector to
$x$. The list of all such oriented $(d-1)$-boxes and their sign vectors is
denoted by $\mathit{faces}$. This is a sufficient sign list wrt. $f_{\neg l}$ and $s\,(-1)^{l+1}\,{\rm Deg}(\mathit{faces})$ is returned.

The choice of $l$ and $s$ has no impact on the correctness of the algorithm but
can optimize its speed. We choose $l\in \{1,2,\ldots, d+1\}$ and $s\in \{+,-\}$
in such a way that the number of \emph{selected boxes} is minimal. See Section~\ref{sec:experimental-results} for a
more detailed discussion of this issue.
The algorithm for calculating 
$\deg(id,[-1,1]^2,0)$, $l=1$ and $s=1$ is displayed in
Figure~\ref{fig:identity_degree}.   
\begin{figure}[htbp]
  \centering
   \includegraphics[width=8cm]{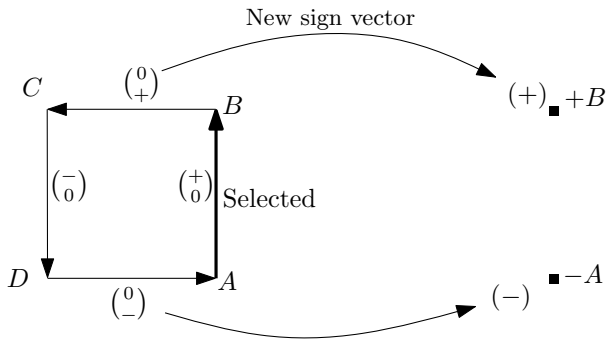}
   \caption{Description of the recursive step for the identity function on the
     oriented box $([-1,1]^2, +)$. 
   For the choice $l=1$ and $s=+$, we have one selected box $AB$.  The degree functions returns 
   ${\rm Deg}(((B,1),(+)), \, ((A, -1), (-)))=\frac{1+(-1)(-1)}{2}=1$ (in this notation, $(B,1)$ is an oriented zero-dimensional box and $(+)$ its sign-vector). From the boxes $BC$ and $DA$, only the sign information is used and the box $CD$ is ignored.}
  \label{fig:identity_degree}
\end{figure}

If the input of the Deg function is a sign list representing a boundary of an oriented cubical set, then the list of selected boxes is exactly the set $\Lambda_{l,s}$ from
Lemma \ref{lemma:1}. 
We will prove in Section \ref{sec:proof-correctness} that the list $\mathit{faces}$ can be subdivided into
$\sum_j \partial D_j + \ud$ where $\{D_j\}_j$ is the decomposition of $\Lambda_{l,s}$ into oriented cubical sets and $\ud$ contains each box $x$ 
the same number of times as $-x$, where $-x$ represents the box $x$ with opposite orientation. 
We will prove that
${\rm Deg}(\mathit{faces})={\rm Deg}(\sum_j {\partial D_j})=\sum_j {\rm Deg}(\partial D_j)$. Together with equation $(\ref{degreeformula})$ this implies 
$${\rm Deg}(L)=s\,(-1)^{l+1}\,\sum_j {\rm Deg}(\partial D_j)=\deg(f,\Omega,0).$$
One example of a possible $\mathit{faces}$ construction is displayed in Figure \ref{fig:faces}.

\begin{figure}[htbp]
  \centering
   \includegraphics[width=7cm]{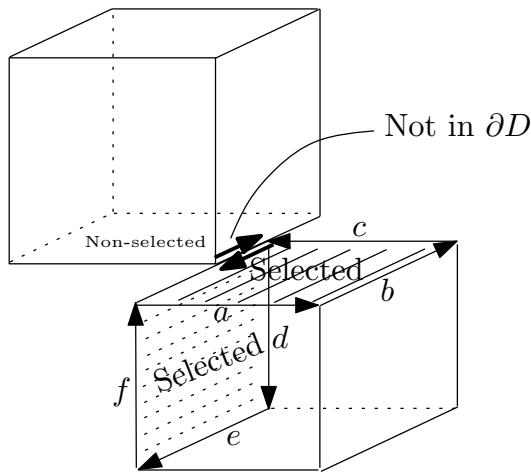}
   \caption{In this case, $B$ is an oriented cubical set containing two $3$-dimensional boxes and the Deg function input
$\partial B$ consists of twelve $2$-dimensional boxes. Two of them are \emph{selected} and form an oriented cubical set $D$ 
with oriented boundary $\partial D=\{a,b,c,d,e,f\}$. The list \emph{faces} contains two more boxes, identical with opposite orientation.}
  \label{fig:faces}
\end{figure}

\subsection{Pseudocode}
\bigskip

\begin{tabbing}\hs\hs\=\hspace*{1cm}\=\hs\hs\=\kill
{\bf Function} Main\\
\textbf{Input}: \\
\> $B$:\> oriented $n$-box \\ 
\> $I(f)$: \> algorithmic representation of an interval-computable $\R^n$-valued function $f$\\
\>\> s.t. $0\notin f(\partial B)$\\
\textbf{Output}: the degree $\deg(f,B,0)$\\
\\
boundary\_info $\leftarrow$ refineCov($I(f)$, $B$)\\
\textbf{return} Deg(boundary\_info)\\
\end{tabbing}

For an interval-computable function $f$ (see Definition~\ref{def:intcomp}) and
box $a$, if $0\not\in I(f)(a)$, we can infer a non-zero
sign vector entry (see Definition~\ref{def:signcover}) for $a$. Moreover, due to
interval-computability, if $a$ is small enough and $0\notin f(a)$, then $0\notin I(f)(a)$.
Hence, a function with the following specification can be easily implemented by starting with the list of $2n$ faces of $B$, using
$I(f)$ to assign sign vectors to them so that the constructed sign list is wrt. $f$, 
and recursively splitting the boxes in the list 
until the interval evaluation $I(f)$ computes the necessary sign information for it to be sufficient. 

\begin{tabbing}\hs\hs\=\hspace*{1cm}\=\hs\hs\=\kill
{{\bf Function} refineCov}\\
\textbf{Input}:\\
\> $B$: \> an $n$-box in $\R^n$\\
\> $I(f)$: \> algorithmic representation of an interval-computable $\R^n$-valued function \\
\>\> s.t. $0\notin f(\partial B)$\\
\textbf{Output}:\\
\> Sufficient sign list wrt. $f$, covering the oriented boundary $\partial B$ of $B$. \\
\end{tabbing}

Now, finally, we can compute the degree from a sufficient sign covering. 

\begin{tabbing}\hs\hs\=\hs\hs\=\hs\hs\=\hs\hs\=\kill
{\bf Function} Deg\\
\textbf{Input}: $L$: Sufficient sign list wrt. some function $f$, covering the oriented boundary $\partial B$ of $B$. \\
\textbf{Output}: $\deg(f,B,0)$ \\
 \textbf{if} $d=0$ \textbf{then}\\
\> \textbf{return} $\frac{1}{2}\sum_{(a,sv)\in L} orientation(a)\times sv$\\
 \textbf{else if} $L=\{\}$ \textbf{then}\\
\>\textbf{return} $0$\\
 \textbf{else}\+\\
 \textbf{let} $1\leq l\leq d+1$ and $s\in \{+,-\}$\\ 
 $L^{sel}\leftarrow \{ (a, (sv_1,\dots,sv_{d+1}))\in L \mid sv_l=s\}$\\
 $L^{non}\leftarrow L-L^{sel}$\\
 $\mathit{faces}\leftarrow \{ \}$\\
 \textbf{for all} $a\in L^{sel}$ \textbf{do}\+ \\
   bound $\leftarrow$ list of the oriented faces of $a$\\
   split the boxes in bound until for all $b\in bound$, either\\
   -- $b$ is a subset of some element of $L^{non}$, or\\
   -- the intersection of $b$ with any element of $L^{non}$ has dimension smaller than $d-1$\\
   \textbf{for every} $b\in bound$ that is a subset of some box $S$ in $L^{non}$ \textbf{do}\+\\
   $\mathit{faces}\leftarrow \mathit{faces} + (b, sv)$, where\\
   \> $sv$ is the sign vector of $S$ with omitted $l$-th component\-\-\\
 \textbf{return} $s (-1)^{l+1}$ Deg($\mathit{faces}$)\-\\
\end{tabbing}
Note that the input/output specification of the function Deg describes the
behavior for calls from the outside. Recursive calls of the function Deg might
violate the condition on the input---it might be a more complicated sign
list. We will discuss details on the structure of that list and correctness of
recursive calls in the following section.

\subsection{Proof of Correctness}

\label{sec:proof-correctness}
The algorithm first creates a sufficient sign list wrt. $f: \partial B\to \R^{n}$ where $B$ is the input box. 
This sign list is then an input for the recursive function $\rm{Deg}$.
We want to prove that if $L$ is a sufficient sign list wrt. $f$ covering the boundary $\partial B$ of a box $B$,
then ${\rm Deg}(L)$ returns the degree $deg(f,B,0)$.

To prove this, we will analyze the $\rm{Deg}$ function body.
When dealing with $d$-dimensional sufficient sign lists, we always assume that some $l\in\{1,\ldots, d+1\}$
and $s\in \{+,-\}$ has been chosen.  
Let $L$ be a sufficient sign list wrt. $f$. We denote $L^{sel}:=\{a\in L|\,(SV_a)_l=s\}$ and $L^{non}:=L-L^{sel}$
the sub-list of selected and non-selected boxes.
For each $a\in L^{sel}$,
the $\rm{Deg}$ function refines the boundary $\partial a$ until each $x\in \partial a$ is either a subset of some $S\in L^{non}$
or has at most $(d-2)$-dimensional intersection with each $S\in L^{non}$. 
For each $x\in \partial a$ that is a subset of a $S\in L^{non}$, it assigns to $x$
the sign vector $SV_S$ with deleted $l$-th coordinate. We denote the sub-list of all such $x$ constructed from $a$ by 
$\mathit{faces}(a)$.  The list $\mathit{faces}$ constructed in the $\rm{Deg}$ function body satisfies 
$$\mathit{faces}=\sum_{a\in L^{sel}}\,\mathit{faces}(a)$$
and $s\,(-1)^{l+1}\,{\rm Deg}(\mathit{faces})$ is returned.

In this section, we will suppose that some implementation of the algorithm is given. This includes a rule for the choices of
$l,s$, subdivision of the boundary faces of the \emph{selected boxes}, order of the lists $L^{sel}$ and $L^{non}$ and the choice of $S$.
We will show that if the sign list satisfies a certain regularity condition defined in Definition~\ref{def:star}, then the $\rm{Deg}$
function output is invariant with respect to some changes of the input list, including any change of order, merging and splitting some boxes
or adding and deleting a pair of identical boxes with opposite orien\-ta\-tion. This is shown in Lemma \ref{equiv}.
Further, we show that the list $L^{sel}$
can be decomposed into the sum of oriented cubical sets $D_1,\ldots, D_m$ such
that $0\notin \bigcup_{i\in\{1,\dots, m\}}f_{\neg l}(\partial D_j)$ 
and such that the list $\mathit{faces}$ constructed in the $\rm{Deg}$
function body is a merging of $\sum_j \partial D_j$ and a set of pairs $\{x, -x\}$, so that ${\rm Deg}(\mathit{faces})={\rm Deg}(\sum_j \partial D_j)$.
In Theorem \ref{Theorem:correctness} we prove that ${\rm Deg}(\sum_j \partial D_j)=\sum_j {\rm Deg}(\partial D_j)$ and combining this with 
equation $(\ref{degreeformula})$ in Theorem \ref{thetheorem}, we show that if $L$ is a sufficient sign list wrt. $f$ covering the boundary $\partial B$ of a box $B$,
then ${\rm Deg}(L)$ returns the degree $\deg(f,B,0)$.
\begin{definition}
\label{def:lists-equivalence}
Let $L$ and $L'$ be two sufficient sign lists wrt. $f$. We say that 
$L$ is equivalent to $L'$ and write $L\simeq L'$, if $L'$ can be created from $L$ by applying a finite number of
the following operations:
\begin{itemize}
\item Changing the order of the list,
\item Replacing some oriented box $a$ in one list by two boxes $a_1, a_2$ where $a_1, a_2$ is the splitting of $a$
into two oriented sub-boxes with equal sign vectors $SV_a=SV_{a_1}=SV_{a_2}$,
\item Merging two oriented boxes $a_1$, $a_2$, that form a splitting of some box $a$ and have the same sign vector $SV_{a_1}=SV_{a_2}$,
to one list element $(a,SV_{a_1})$,
\item Adding or deleting a pair of oriented boxes $a$ and $-a$ where $-a$ is the box $a$ with opposite orientation (the sign vectors
$SV_a$ and $SV_{-a}$ do not have to be necessary equal in this case),
\item Changing the sign vectors of some oriented boxes so that the sign covering is still sufficient and wrt. $f$.
\end{itemize}
\end{definition}
Clearly, $\simeq$ is an equivalence relation on sign lists.

\begin{definition}
\label{def:star}
Let $L$ be a $d$-dimensional sufficient sign list wrt. $f$. 
We say that $L$ is \emph{balanced}, if each sub-face $x$ of some $a\in L$ such that
for each $b\in L$, either $x\subseteq b$, or $x\cap b$ is at most $(d-2)$-dimensional \footnote{Here
$x$ and $b$ represent just the box, without taking care of the orientation.}, satisfies
$$|S_x|=|S_{-x}|$$ where $S_x$ is a sub-list of $L$ containing all $a$ s.t. $x$ is an oriented sub-face of $a$. 

In other words, $x$ is a sub-face of some oriented box in $L$ the same number of times as $-x$.
\end{definition}

A sign list representing the oriented boundary $\partial B$ of an $n$-box $B$ is clearly balanced, because for each
$(n-2)$-dimensional sub-face $x$ of some $a\in\partial B$ that is small enough to have either lower-dimensional 
or full intersection with each $b\in\partial B$, $x$ is an oriented sub-face of exactly one $a\in\partial B$ and $-x$ 
is an oriented sub-face of exactly one $a'\in\partial B$.
The following Lemma says that
the property of being balanced is also preserved in the $\mathit{faces}$ construction procedure.
This implies that all input lists $L$ within the recursive $\rm{Deg}$ function are balanced.

\begin{lemma}
\label{lemma:star}
Let $L$ be a sufficient sign list wrt. $f$ that is balanced. Then the list $\mathit{faces}(L)$
created in the Deg function body is also balanced.
\end{lemma}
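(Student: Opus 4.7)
The plan is, for each $(d-2)$-dimensional sub-face $x$ of an element of $\mathit{faces}(L)$ satisfying the hypothesis of Definition~\ref{def:star}, to compute the signed count $|S_x| - |S_{-x}|$ and show it vanishes. The argument rests on two ingredients: the cubical chain identity that for any single oriented box $c$ one has $\partial\partial c = 0$ (so every $(d-2)$-sub-face of $c$ receives canceling contributions from $c$'s $(d-1)$-faces), and the balancedness of $L$ itself, which provides cancellation at the $(d-1)$-dimensional level between the contributions of $L^{sel}$ and $L^{non}$ on nicely positioned faces.

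First, I would verify that the equivalence operations of Definition~\ref{def:lists-equivalence} preserve balancedness: splitting a box creates a new shared cut-face appearing once with each induced orientation, merging is its inverse, and the remaining operations are innocent. Using this, I would refine $L$ to an equivalent balanced $L'$ such that, for the fixed $x$: every box $c \in L'$ either has $x$ lying on one of its $(d-1)$-faces or meets $x$ in dimension at most $d-3$, and every $(d-1)$-face $F$ of some $a \in (L')^{sel}$ contained in some $S \in (L')^{non}$ is an actual face of $S$. I would also argue that this refinement lifts to a corresponding subdivision of $\mathit{faces}(L)$, so it suffices to prove balance of $\mathit{faces}(L')$.

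For the refined $L'$, the contributions to $|S_x| - |S_{-x}|$ in $\mathit{faces}(L')$ come from pairs $(a, F)$ with $a \in (L')^{sel}$, $F$ a $(d-1)$-face of $a$ containing $x$ and contained in some $S \in (L')^{non}$. Writing $\sigma_{a,F}$ for the orientation of $F$ induced from $a$, $\tau_F(x)$ for the orientation of $x$ induced from $F$ in a reference orientation, and $A_F := \sum_{a \in (L')^{sel},\, F\text{ face of }a} \sigma_{a,F}$, $B_F := \sum_{S \in (L')^{non},\, F\text{ face of }S} \sigma_{S,F}$, the signed count becomes $T = \sum_F \tau_F(x)\, A_F$ over those $F$ lying in some non-selected box. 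The refinement makes each such $F$ satisfy the $L'$-compatibility hypothesis of Definition~\ref{def:star}, so balancedness of $L'$ gives $A_F + B_F = 0$; substituting yields $T = -\sum_{S \in (L')^{non}} \sum_{F\text{ face of }S,\, x\subseteq F} \tau_F(x)\, \sigma_{S,F}$, and the inner sum for each fixed $S$ is exactly the coefficient of $x$ in $\partial\partial S$, which vanishes. Hence $T = 0$.

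The main obstacle is the refinement-bridging step: the list $\mathit{faces}(L)$ is produced by the algorithm with its own internal splittings of boundary faces, and I must reconcile an external refinement of $L$ with this construction without circularly invoking results that depend on the present lemma. I expect to resolve this either by showing directly that the algorithm's internal subdivision already aligns pieces of $\partial L^{sel}$ with the containing $L^{non}$-boxes once attention is restricted to a fixed $x$, or by further subdividing $\mathit{faces}(L)$ through the equivalence operations on the output list to reduce to the clean refined case before invoking the $\partial\partial = 0$ cancellation.
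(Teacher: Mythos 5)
Your strategy is genuinely different from the paper's: you want to establish $|S_x|=|S_{-x}|$ by a signed count that combines the chain identity $\partial\partial c=0$ for individual boxes with the balancedness of $L$ at the $(d-1)$-level, whereas the paper constructs an explicit bijection $T_{x}\to T_{-x}$ by following a chain of adjacent small faces and selected boxes around $x$, using for each small face $e$ a bijection $S_{[e]}\to S_{[-e]}$ supplied by the balancedness of $L$. The algebraic cancellation idea is sound and, if completed, would be cleaner than the paper's chain-and-termination argument. But as submitted there is a genuine gap, and you name it yourself without closing it: the whole computation is carried out for a refined list $L'$, and nothing in the proposal connects $\mathit{faces}(L')$ to the list $\mathit{faces}(L)$ that the algorithm actually produces from the unrefined $L$. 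This bridging is not a formality. The claim that refining $L$ induces only splittings, sign-vector changes and cancelling pairs $(e,-e)$ in $\mathit{faces}$ is precisely the content of the splitting/merging cases in the proof of Lemma~\ref{equiv}, and Lemma~\ref{equiv} relies on Lemma~\ref{lemma:star}; you would have to re-derive those cases from scratch and check they use nothing downstream. (They happen not to invoke balancedness, so the circularity is avoidable, but that verification is the bulk of the technical work and is entirely absent.) You would also need to verify that splittings and cancelling-pair insertions preserve balancedness of the \emph{output} list, which you assert but do not prove.

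There are also two concrete defects in the computation itself. First, after substituting $A_F=-B_F$ you identify the resulting inner sum over faces of a non-selected box $S$ with the coefficient of $x$ in $\partial\partial S$; but your index set only contains those $F$ that are faces of some \emph{selected} box lying inside a non-selected box, while $\partial\partial S=0$ requires summing over \emph{all} faces of $S$ through $x$. The repair is to go the other way: extend $\sum_F\tau_F(x)A_F$ to all small $(d-1)$-boxes through $x$ (the added terms have $A_F=0$, since either $F$ is a face of no selected box, or $F$ lies in no non-selected box so $B_F=0$ and balancedness forces $A_F=0$), and then conclude $T=\sum_{a\in (L')^{sel}}[\partial\partial a]_x=0$ directly. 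Second, your refinement conditions do not force $x$ onto the $(d-2)$-skeleton of the boxes containing it: if $x$ lies in the relative interior of a single $(d-1)$-face of a box $c$, only one face of $c$ contains $x$ and the two-faces-cancel mechanism behind $\partial\partial c=0$ breaks down; you need to additionally split every box of $L'$ along the hyperplanes in which $x$ is degenerate. Both points are repairable, but together with the unresolved bridging step they leave the proposal as an outline of a viable alternative proof rather than a proof.
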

The proof of this is technical and we postpone it to the appendix. 

\begin{lemma}
\label{equiv}
Let $L$ be a balanced sufficient sign list wrt. $f$ and $L'$ be equivalent to $L$. Then ${\rm Deg}(L)={\rm Deg}(L')$.
\end{lemma}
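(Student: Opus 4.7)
The plan is to induct on the dimension $d$ of the sign list. Since $\simeq$ is generated by the five elementary operations of Definition~\ref{def:lists-equivalence}, it suffices to verify that ${\rm Deg}$ is invariant under each. In the base case $d=0$, where ${\rm Deg}(L) = \tfrac{1}{2}\sum_{(a,sv)\in L} orientation(a)\cdot sv$, reordering is trivial, split/merge are vacuous on points, and the sufficiency-plus-wrt-$f$ condition forces each entry $sv$ on a $0$-box $a$ to equal $\sign f_1(a)$: this kills operation~5 and makes the pair $(a,-a)$ self-cancelling, since $orientation(a)\cdot sv + orientation(-a)\cdot sv = 0$.

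For the inductive step, Lemma~\ref{lemma:star} guarantees that the list $\mathit{faces}$ built inside ${\rm Deg}$ is balanced, making the induction hypothesis available on it. The goal is to show that, under a common choice of $(l,s)$ and suitably paired internal choices of subdivision and covering box, each elementary operation on $L$ yields a $\mathit{faces}$ list equivalent to the one coming from $L'$. Order changes preserve the multi-sets $L^{sel}, L^{non}$ and hence only permute $\mathit{faces}$. Splitting a selected box $a$ into $a_1, a_2$ replaces $\mathit{faces}(a)$ by $\mathit{faces}(a_1)+\mathit{faces}(a_2)$, which differs only by a pair of oppositely-oriented copies of the internal splitting face, removable by operation~4; merging is the reverse. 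Sign-vector changes on $L^{non}$ elements translate directly to operation~5 on $\mathit{faces}$. To decouple the argument from the implementation's potentially different choices on $L$ and $L'$, I would first establish independence of ${\rm Deg}$ from these internal choices on a fixed balanced sufficient sign list: running the same case analysis on two distinct choice-sequences produces two equivalent balanced $(d-1)$-dimensional $\mathit{faces}$ lists, whose ${\rm Deg}$ values then agree by the induction hypothesis; the common outer factor $s(-1)^{l+1}$ closes the argument.

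The main obstacle is operation~4, the addition/deletion of a pair $(a,-a)$. If neither box lies in $L^{sel}$, the new $L^{non}$ elements $a,-a$ may permit coarser subdivisions of the boundaries of selected boxes and thereby change which sub-faces land in $\mathit{faces}$ and with which sign vectors. The key observation is that $a$ and $-a$ coincide as point sets, so any sub-face $x$ of a selected box covered by $a$ is equally covered by $-a$; treating the pair symmetrically produces cancelling pairs $(x,-x)$ in $\mathit{faces}$ that are removable by operation~4, any residual discrepancy in the inherited sign vectors being absorbed by operation~5. If $a$ (and hence $-a$) lies in $L^{sel}$, then $\mathit{faces}(a)$ and $\mathit{faces}(-a)$ differ only by a systematic orientation reversal on each sub-face and again cancel pairwise, contributing nothing net. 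Combined with the inductive hypothesis applied to the resulting equivalent $\mathit{faces}$ lists in dimension $d-1$, this yields ${\rm Deg}(L) = {\rm Deg}(L')$ and completes the induction.
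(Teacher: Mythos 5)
Your overall strategy (induction on the dimension, using Lemma~\ref{lemma:star} to make the induction hypothesis available on $\mathit{faces}$, and checking invariance under each of the five elementary operations, with the base case handled exactly as you do) is the same as the paper's, and your treatment of reordering, split/merge, and the case where both added boxes are selected matches the paper. But two of the cases you dispatch quickly are precisely where the real work lies, and your arguments for them have genuine gaps.

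First, for operation~5 you only consider sign-vector changes that ``translate directly to operation~5 on $\mathit{faces}$.'' That covers the case where the change does not move any box between $L^{sel}$ and $L^{non}$. When a change of $SV_a$ turns a non-selected box $a$ into a selected one (or vice versa), $\mathit{faces}$ changes structurally: it acquires the whole new sub-list $\mathit{faces}(a)$, and entries contributed by other selected boxes whose boundary faces were previously covered by $a$ disappear or change. The paper's proof spends most of its length here: it shows $\mathit{faces}'\simeq \mathit{faces}+\partial a$, where $\partial a$ carries the old sign vector with the $l$-th entry deleted, and then proves separately --- again using balancedness --- that appending a complete oriented boundary of a single box with a constant sign vector does not change the output of ${\rm Deg}$. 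Your proposal contains no counterpart to either step. Second, in operation~4 with $a,-a$ both non-selected, your cancellation mechanism is not how the algorithm behaves: a boundary sub-face $x$ of a selected box $c$ is appended to $\mathit{faces}$ at most once per selected box having $x$ as an oriented sub-face, with orientation induced from $c$ and sign vector taken from \emph{one} chosen covering box $S\in L^{non}$; being covered by both $a$ and $-a$ does not produce two oppositely oriented copies of $x$. The entries that must cancel are the genuinely new ones --- sub-faces of selected boxes that were previously covered by no non-selected box and are now covered by $a$ --- and the reason they cancel is the balancedness of $L$: every box of $L$ having such an $x$ as an oriented sub-face is selected, so $|S_x|=|S_{-x}|$ forces these new entries to pair off as $(x,-x)$ across distinct selected boxes. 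You invoke balancedness only implicitly via Lemma~\ref{lemma:star}; here it is needed directly, and substituting the ``covered by $a$ versus $-a$'' symmetry for it leaves the step unproved.
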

\begin{proof}
We prove this by induction on the dimension of the sign list. If $L$ is a $0$-dimensional sign list, then
nontrivial merging and splitting of a box is impossible. Independence of order of the list follows from the formula
${\rm Deg}(L)=\frac{1}{2}\,\sum_{a\in L}\, orientation(a)\times\,SV_a$ and adding a pair $(x,-x)$ to the list will
add to the sum $\frac{1}{2}\,(SV_x-SV_{-x})=\frac{1}{2}(\sign(f(x))-\sign(f(x)))=0$.

Assume that the lemma holds up to dimension $d-1$. 
Let $L'$ be a permutation (i.e. the same multiset with different order of elements) 
of a $d$-dimensional sign list $L$ and $\mathit{faces}'$ be the list created for $L'$ in the $\rm{Deg}$ function body.
Changing the order of the list possibly changes the order of 
$L^{sel}$ and $L^{non}$. However, $a\in L^{sel}$ if and only if $a\in (L')^{sel}$ and the same number of times.
Further, $\mathit{faces}(a)$ and $\mathit{faces}'(a)$ can be constructed from each other by a finite number of splitting, merging and
sign vector changing operations, because both are sufficient sign list wrt. $f_{\neg l}$ representing a sign covering of the set
$$\cup \,\{x\,|\,x\,\,{\rm is\,\,a\,\,boundary\,\,sub\mathrm{-}face\,\,of\,\,}a\,\,{\rm and}\,\,x\subseteq n\,\,{\rm for\,\,some}\,\,n\in L^{non} \}.$$
So, $\mathit{faces}'\simeq \mathit{faces}$ and ${\rm Deg} \,L=s\,(-1)^{l+1}\,{\rm Deg}\, \mathit{faces}\,=s\,(-1)^{l+1}\,\,{\rm Deg}\, \mathit{faces}'\,=\, {\rm Deg}(L')$.

Further, let $L'$ be created from $L$ by splitting or merging some oriented box and $\mathit{faces}'$, resp. $\mathit{faces}$ the list constructed in the Deg function body. 
If we split or merge a non-selected box,
then $\mathit{faces}'$ will be equivalent to $\mathit{faces}$, because the equivalence class of $\mathit{faces}(a)$ depends only on the union
of all non-selected boxes. Splitting a selected box $a$ into $a_1, a_2$ will result in splitting some elements of $\mathit{faces}(a)$,
possibly changing their sign-vectors (depending on the choice of $S$ in the algorithm) compatibly with $f_{\neg l}$
and generate a finite number of new pairs $e$ and $-e$ s.t. $e\in \mathit{faces}'(a_1)$ and $-e\in \mathit{faces}'(a_2)$.  
So, $\mathit{faces}$ is again equivalent to $\mathit{faces}'$ and we can apply the induction.

Assume that we change the sign vector of an element in $L$ in such a way that we still have a sufficient sign list wrt. $f$.
If we change the sign vector of a box such that we don't change a selected box to a non-selected or vice versa,  
then this change may only result in a possible change of sign vectors in $\mathit{faces}$ wrt. $f_{\neg l}$ (and possibly splitting and merging
of the boxes in $\mathit{faces}$, if the sign vector change has an impact on the choice of $S\in L^{non}$ in the algorithm). 
So, in this case, $\mathit{faces}\simeq \mathit{faces}'$.
Assume that we change the sign vector $SV_a$ so that
some $a\in L^{non}$ will become selected. Denote $L$ to be the original sign list ($a\in L^{non}$) and $L'$ to be the new sign list
in which $a\in {L'}^{sel}$ and let $\mathit{faces}$, resp. $\mathit{faces}'$ be the corresponding sign lists created in the $\rm{Deg}$ function body.
First note that the sublists of $\mathit{faces}$ containing all elements that are not sub-faces of $a$ and the sublist of $\mathit{faces}'$ containing
all elements that are not sub-faces of $a$, are equivalent, so we only have to analyze the changes caused by the changed sign-vector of $a$.
We claim that the sign list $\mathit{faces}'$ is equivalent to $\mathit{faces}+\partial a$, where 
$\partial a$ is a sign list covering a boundary of $a$ such that all $x\in\partial a$ are endowed with the old sign vectors $SV_a$
with $l$-th entry deleted. 
An implementation of the Deg function body will create, in the $\mathit{faces}'(a)$ construction, 
a decomposition $\partial a=a^{sel}\cup a^{non}$,  where each oriented box in $a^{sel}$
has at most $(d-2)$-dimensional intersection with each $b\in {L'}^{non}$ and each oriented box in $a^{non}$ 
is a subset of some $b\in {L'}^{non}$. 
It follows that each $x\in a^{non}$ is contained in $\mathit{faces}'(a)$ and the list $\mathit{faces}'$ contains $x$ one more time than $\mathit{faces}$.
Further, due to the fact that $L$ is balanced, 
for each $x\in a^{sel}$, there exist the same number
of boxes $u$ in $L$ s.t. $x$ is an oriented sub-face of $u$ as boxes $v$ s.t. $-x$ is an oriented sub-face of $v$, 
$a$ being among the $u$'s. All such $u$ and $v$'s are in ${L'}^{sel}$, $a$ being the only of these boxes contain in $L^{non}$. 
This implies that the list $\mathit{faces}$ is equivalent to a list containing each such $-x$ one more time than $x$.
After deleting a finite number of pairs $(x,-x)$, $\mathit{faces}$ is equivalent to a list containing one $-x$ for each $x\in L^{sel}$
(it comes from $\mathit{faces}(v)$ for some $v\in L^{sel}$ containing a sub-face of $a\in L^{non})$. 
In $\mathit{faces}'$, there is no such $-x$, because $x$ is not contained in any $b\in {L'}^{non}$. 
Summarizing this, $\mathit{faces}'$ can be constructed from $\mathit{faces}$
be adding a sign list covering $|a^{non}|$ and deleting a sign list covering $|a^{sel}|$. This is equivalent to adding a sign list
covering all $|\partial a|$ and we obtain that $\mathit{faces}'\simeq \mathit{faces}+\partial a$. By induction, we may assume that all boxes in $\partial a$
has equal sign vector, compatible with $f_{\neg l}$.
Now we need to show that adding the full boundary $\partial a$ of $a$ endowed with a constant sign vector 
does not change the $\rm{Deg}$ output.
In the $0$-dimensional case, this says that ${\rm Deg}(L+\partial a)={\rm Deg}(L)+\frac{1}{2}\,(s-s)$ where
$s$ is the sign of $f$ on $a$. 
Let $L'=L+\partial a$ be a sign list of positive dimension such that all elements in $\partial a$ are endowed with the same sign-vector.
In the consequential $\mathit{faces}$ construction, either all boundary faces of $a$ will be \emph{selected} 
or they will be all \emph{non-selected}. In the first case,
$\mathit{faces}'$ will be a sum of $\mathit{faces}$ and pairs $(x,-x)$. In the second case, 
$\partial a$ may be refined so that each element is either a subset of some other non-selected box, 
or has only lower-dimensional intersection with each non-selected box.
Those $\alpha\in\partial a$ that are a subset of some other non-selected box can only possibly change the sign vector
of some boxes in $\mathit{faces}$. Boxes $\beta\in\partial a$ that have only lower-dimensional intersection with each non-selected box
will lead (after possibly merging and splitting the $\mathit{faces}$ list) to the addition of a sum of pairs $x$ and $-x$ to $\mathit{faces}$ due to the fact
that $\mathit{faces}$ is a balanced sign list.  So, $\mathit{faces}'\simeq \mathit{faces}+\partial a\simeq \mathit{faces}$ and ${\rm Deg}(L')={\rm Deg}(L)$.

Finally, adding a pair of two selected boxes $a$ and $-a$ will create additional pairs $x$ and $-x$ in $\mathit{faces}$.
Adding a pair of two non-selected boxes $a$ and $-a$ may enlarge the union of the non-selected boxes. Let $L':=L+a+(-a)$ for some
non-selected $a$. The $\mathit{faces}'$ list created in the $\rm{Deg}$ function body is equivalent (after merging and splitting some boxes)
to a sum $\mathit{faces}'_1 + \mathit{faces}'_2$, where $\mathit{faces}'_1$ consists of all oriented sub-faces $x$ of some $a\in {L'}^{sel}$ that are contained in some
$b\in L^{non}$ and $\mathit{faces}'_2$ consists of all oriented sub-faces $x$ of some $a\in {L'}^{sel}$ that are contained in $a$ but have 
at most $d-2$-dimensional intersection with each $b\in L^{non}$. We may further split the boxes in $\mathit{faces}'_2$ and suppose that for each 
$x\in \mathit{faces}'_2$ and $b\in L$, either $x\subseteq b$ or $x\cap b$ is at most $d-2$-dimensional.
Then $\mathit{faces}\simeq \mathit{faces}'_1$ and due to the balancedness of
$L$, each $x\in \mathit{faces}'_2$ is a~sub-face of some
$u\in L$ the same number of times as $-x$ is a sub-face of some $v\in L$. All these $u$ and $v$'s have to be in $L^{sel}$, because
$x$ has a lower-dimensional intersection with each $b\in L^{non}$.
So, the $\mathit{faces}'_2$ list is equivalent to a sum of pairs $(x,-x)$ and $\mathit{faces}'\simeq \mathit{faces}'_1 + \mathit{faces}'_2\simeq \mathit{faces}$.
If we add two boxes $a$ and $-a$ such that $a$ is selected and $-a$ non-selected,
we may change the sign vector of $-a$ (due to the previous paragraph) so that both $a$ and $-a$ are selected and the $\rm{Deg}$ function output doesn't change. 
\end{proof}

\begin{theorem}
\label{Theorem:correctness}
  Let $B$ be an $n$-box and $I(f)$ be an algorithm representing an interval-computable function $f: B\to\R^n$ s.t. $0\notin f(\partial B)$.
  The presented algorithm, run with $B$ and $I(f)$ as inputs, terminates and returns the degree $\deg(f,B,0)$.
\end{theorem}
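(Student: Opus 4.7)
The plan is to establish termination first and then prove correctness by induction on the dimension of the sign list input to \textrm{Deg}, relying on Lemma~\ref{lemma:1}, Lemma~\ref{lemma:star}, Lemma~\ref{equiv}, and the dimension reduction formula (\ref{degreeformula}) from Theorem~\ref{thetheorem}.

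For termination of \textrm{refineCov}, I would use that $0\notin f(\partial B)$ together with the continuity of $f$ and compactness of $\partial B$ to conclude that $\inf_{x\in\partial B}\|f(x)\|>0$. By the diameter property in Definition~\ref{def:intcomp} applied componentwise, there is a uniform $\delta>0$ such that every box $a\subseteq\partial B$ with $\Diam{a}<\delta$ admits at least one index $i$ with $0\notin I(f_i)(a)$, yielding a nonzero sign vector entry; hence the recursive refinement stops after finitely many subdivisions. Termination of \textrm{Deg} then follows by induction on the sign-list dimension $d$, since each recursive call operates on a sign list one dimension lower and bottoms out at $d=0$, which returns immediately.

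For correctness I would prove, by induction on $d$, the stronger statement that for every balanced sufficient sign list $L$ wrt.\ a continuous $g$ such that $L$ represents the oriented boundary of an oriented cubical set $\Omega$, one has $\textrm{Deg}(L)=\deg(g,\Omega,0)$. The $d=0$ case follows directly from the base-case formula $\tfrac12\sum_{(a,sv)} orientation(a)\cdot sv$ and the elementary formula $\deg(g,\overrightarrow{ab},0)=\tfrac12(\sign g(b)-\sign g(a))$. For the inductive step, fix the choices of $l$ and $s$ made inside \textrm{Deg}. The list of selected boxes $L^{sel}$ is exactly $\Lambda_{l,s}$ from Lemma~\ref{lemma:1}, which decomposes it into oriented cubical sets $D_1,\dots,D_m$ with oriented boundaries $\partial D_j$; moreover each $b\in\partial D_j$ is a sub-face of some box in $L^{non}$ by condition~\ref{lm:4} of that lemma, so the sign vector inherited from $L^{non}$ (with $l$-th entry deleted) is a sufficient sign covering of $\partial D_j$ wrt.\ $f_{\neg l}$ and, in particular, $0\notin f_{\neg l}(|\partial D_j|)$.

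The heart of the argument is then to show $\mathit{faces}\simeq\sum_j \partial D_j$ in the sense of Definition~\ref{def:lists-equivalence}. Each $b\in\partial D_j$ is, by construction, contained in \textit{faces} (possibly after subdivision); conversely, every $x\in\mathit{faces}$ that does not appear in some $\partial D_j$ must be a $(d{-}1)$-dimensional sub-face of two distinct selected boxes belonging to the same $D_j$ (internal to the oriented cubical set), and by the compatibility condition in Definition~\ref{oriented_cubical_set} it occurs in $\mathit{faces}$ once as $x$ and once as $-x$, since the inherited sign vector comes from the $L^{non}$ box sitting on that face which is shared---the use of balancedness, guaranteed for every recursive call by Lemma~\ref{lemma:star}, is needed precisely to account for faces lying on intersections of $L^{non}$ boxes. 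After pairing up and deleting such $(x,-x)$ pairs, what remains is exactly $\sum_j\partial D_j$. By Lemma~\ref{equiv}, $\textrm{Deg}(\mathit{faces})=\textrm{Deg}(\sum_j\partial D_j)$, and a further application of Lemma~\ref{equiv} (together with a short argument showing that disjoint oriented-boundary pieces recurse independently, since selecting any $l',s'$ in $\sum_j\partial D_j$ picks out selected boxes from only one $D_j$ at a time up to equivalence) gives $\textrm{Deg}(\sum_j\partial D_j)=\sum_j\textrm{Deg}(\partial D_j)$. Applying the induction hypothesis to each $\partial D_j$ yields $\textrm{Deg}(\partial D_j)=\deg(f_{\neg l},D_j,0)$, and combining with (\ref{degreeformula}) finishes the inductive step:
\[
\textrm{Deg}(L)=s(-1)^{l+1}\sum_{j}\deg(f_{\neg l},D_j,0)=\deg(f,\Omega,0).
\]
The main obstacle I expect is the justification that $\mathit{faces}\simeq\sum_j\partial D_j$; this is where the bookkeeping between the combinatorial decomposition provided by Lemma~\ref{lemma:1} and the balancedness property of Lemma~\ref{lemma:star} must be handled carefully, since internal shared faces of a single $D_j$ and faces shared across different $D_j$ need different treatments, and the subdivision choices made by the implementation can affect intermediate forms but not the equivalence class.
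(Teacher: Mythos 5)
Your overall route is the same as the paper's: induction on the dimension of the sign list, using Lemma~\ref{lemma:1} to decompose $L^{sel}=\Lambda_{l,s}$ into oriented cubical sets $D_j$, showing $\mathit{faces}\simeq\sum_j\partial D_j$ up to cancelling $(x,-x)$ pairs, invoking Lemma~\ref{lemma:star} to keep every recursive input balanced so that Lemma~\ref{equiv} applies, and closing with equation~(\ref{degreeformula}). The termination argument for refineCov via compactness and the second property of Definition~\ref{def:intcomp} is fine and is actually more explicit than what the paper writes.

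There is, however, a genuine gap at the additivity step $\mathrm{Deg}(\sum_j\partial D_j)=\sum_j\mathrm{Deg}(\partial D_j)$. Your inductive statement quantifies only over sign lists that form the oriented boundary of a \emph{single} oriented cubical set, but the recursive call operates on $\sum_j\partial D_j$, which is a concatenation of boundaries of several cubical sets; the induction hypothesis as you state it therefore does not apply to the recursive input. The parenthetical justification you offer --- that selecting $l',s'$ in $\sum_j\partial D_j$ ``picks out selected boxes from only one $D_j$ at a time up to equivalence'' --- is false: the Deg body selects \emph{all} boxes across the whole concatenated list whose sign vector has $l'$-th component $s'$, so boxes from several different $\partial D_j$ are selected simultaneously and their faces lists are intermingled in the subsequent recursion. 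The paper's fix is to promote additivity to a separate claim (its statement~\ref{part_one}) and run a \emph{joint} induction: additivity in dimension $d$ is derived from additivity in dimension $d-1$ together with the single-set statement in dimensions $d-1$ and $d$, by decomposing the selected boxes of each $L_i$ into its own cubical sets $D(i)^j_{l,s}$ and showing $\mathit{faces}\simeq\sum_{i,j}\partial D(i)^j_{l,s}$. You would need to strengthen your induction hypothesis in the same way; with that change the rest of your argument goes through.
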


\begin{proof}

The theorem is a consequence of statement~\ref{part_two} of the following:
\begin{enumerate}
\item\label{part_one} Let $\Omega_1, \ldots \Omega_k$ be oriented cubical sets of dimension $d+1$, let $L_1, \ldots, L_k$ be $d$-dimensional sufficient sign lists 
wrt. to a function $f: \cup|\Omega_i| \to \R^{d+1}$ s.t. the boxes in $L_i$ are $d$-boxes forming an oriented boundary $\partial\Omega_i$ 
of $\Omega_i$ for all $i$.
Then ${\rm Deg}(\sum_i {L_i})=\sum_i {\rm Deg}(L_i)$. 
\item\label{part_two}
Let $\Omega$ be a $(d+1)$-dimensional oriented cubical set and 
let $L$ be a $d$-dimensional sufficient sign list wrt. a function $f: |\Omega|\to \R^{d+1}$, such that the boxes in $L$ form an oriented boundary of $\Omega$.
Then ${\rm Deg}(L)$ returns the number $\deg ({f}, \Omega, 0)$. 
  \end{enumerate}

We prove both statements~\ref{part_one} and~\ref{part_two} by induction on the dimension $d$. 
If the sign lists are $0$-dimensional, then ${\rm Deg}(L)=\frac{1}{2}\sum_{a\in L} orientation(a)\times SV_a$ where $SV_a$ is the $1$-dimensional 
sign vector of $a\in L$ and ${\rm Deg}(\sum_i L_i)=\sum_i {\rm Deg}(L_i)$ is true for any sufficient $0$-dimensional sign lists $L_i$. 
For statement~\ref{part_two}, the Deg function result is compatible with the one-dimensional formula 
$$\deg(f, \overrightarrow{ab}, 0)=\frac{1}{2}(\sign f(b)-\sign f(a))$$ for $f: \overrightarrow{ab}\to\R$.

Assume that the dimension is $d>0$ and both \ref{part_one} and \ref{part_two} hold for lower-dimensional sign lists. 
First we prove \ref{part_two}. Let $L$ be a sufficient sign list such that its oriented boxes 
form the boundary $\partial\Omega$ of a $d+1$-dimensional oriented cubical set $\Omega$. We know that $L$ is balanced. 
Let $l\in\{1,\ldots,d+1\}$ and $s\in\{+,-\}$ be chosen in the Deg function body.
For each box $a\in L$, choose an index $l(a)$ s.t. 
\begin{itemize}
\item if $(SV_a)_l=s$, then $l(a)=l$ and $s(a)=s$
\item if $(SV_a)_l\neq s$, then choose $l(a)$ and $s(a)$ so that the sign vector $(SV_a)_{l(a)}=s(a)\neq 0$ 
\end{itemize}
Such index $l(a)$ and sign $s(a)$ exist for each $a$, because the sign list is sufficient.
For each $l'\in\{1,\ldots,d+1 \}$ and $s'\in\{+,-\}$,
denote $\Lambda_{l',s'}$ a list of all boxes in $a\in L$ such that $l(a)=l'$ and $s(a)=s'$.
The list of selected boxes $L^{sel}$ is formed exactly by the boxes in $\Lambda_{l,s}$ and the list of non-selected boxes
is $L^{non}:=L-L^{sel}$. 
It follows from Lemma~\ref{lemma:1} that the there exist $(d-1)$-dimensional cubical sets $D^j_{l',s'}$
such that $\cup_{j}\,D^j_{l',s'}=\Lambda_{l',s'}$ holds for all $l'\in\{1,\ldots,d+1\}$ and $s'\in\{+,-\}$.
For each $j$ and each $a\in D^j_{l,s}$, let $\mathit{faces}(a)$ be the $(d-1)$-dimensional sign list created from
the sub-faces of $a$ in the $\rm{Deg}$ function body, and let $\mathit{faces}=\sum_{a\in L^{sel}} \mathit{faces}(a)$.
Let $\mathit{faces}(a)_{split}$ be a splitting of $\mathit{faces}(a)$ such that for each 
$e\in \mathit{faces}(a)_{split}$ and each $b\in\partial\Omega$,
either $e\subseteq b$ or $e\cap b$ is at most $(d-2)$-dimensional. Further, define $\partial D^j_{l,s}$
to be the sub-list of $\sum_{a\in D^j_{l,s}} \mathit{faces}(a)_{split}$ containing all $x$ such that
there exists a \emph{unique} $a\in D^j_{l,s}$ s.t. $x$ is a sub-face of $a$ (we don't take care of orientation here). 
This is a sign list covering an oriented boundary of $D^j_{l,s}$ (see Def. \ref{def:oriented_boundary}).

Define $\mathit{faces}_{split}:=\sum_{a\in L^{sel}} \,\mathit{faces}(a)_{split}$. 
Let $x\in \mathit{faces}_{split} - \sum_j\,\partial  D^j_{l,s}$
and assume that $x\in \mathit{faces}(S)_{split}$ for $S\in D^j_{l,s}$.
Because $x\notin\partial D^j_{l,s}$, $x$ is a sub-box of exactly two boxes $S$ and $S'$ in $D^j_{l,s}$ and $x\subseteq b$
for some non-selected box $b$. The sub-list $\mathit{faces}(S')_{split}$ contains a box $y$ s.t. $y\cap x$ is
$(d-1)$-dimensional ($y$ is a sub-box of some face $e$ of $S'$ and $e\cap b$ is $(d-1)$-dimensional). 
The orientation of $y$ induced from $S'$ is different from the orientation of  $x$ (see Def. \ref{oriented_cubical_set}).
So, after possible further splitting of the list $\mathit{faces}_{split}$, we may assume that $y=-x$ and that for each 
$j$, $\sum_{a\in D^j_{l,s}} \mathit{faces}(a)_{split}$ contains either both $x$ and $-x$ or none of them. It follows that
the list $\mathit{faces}_{split}$ contains $x$ the same number of times as $-x$ and the list $\mathit{faces}$ is equivalent to $\sum_j \partial D^j_{l,s}$.
Now we derive
\begin{eqnarray*}
&&{\rm Deg}(L)=s\,(-1)^{l+1}\,{\rm Deg}(\mathit{faces})= \,\,{\rm \,(Lemma\,\, \ref{equiv}})\,\,= s\,(-1)^{l+1}\,{\rm Deg}(\,\sum_j \,\partial D^j_{l,s}) =\\
&& \,{\rm (Induction,\,\,\ref{part_one}.)}\,\,=s\,(-1)^{l+1}\,\sum_j\,{\rm Deg}(\partial D^j_{l,s}) =\,{\rm (Induction,\,\,2.)}=\, \\
&& =s\,(-1)^{l+1}\,\sum_j deg(f_{\neg l}, D^j_{l,s}, 0)=s\, (-1)^{l+1}\sum_{j;\,\,l'=l\,\,{\rm and}\,\,s'=s} deg(f_{\neg l}, D_{l',s'}^j,0)=\\
&& {\rm (Theorem\,\,\ref{thetheorem},\,\,equation \,\,(\ref{degreeformula}))}=\deg(f,\Omega,0).
\end{eqnarray*}

It remains to prove \ref{part_one}. 
Assume that statement \ref{part_one} holds up to dimension $d-1$, and \ref{part_two} holds up to dimension $d$. 
Let $L=\sum_i L_i$, $L_i$ be a $d$-dimensional sufficient sign lists wrt. $f$ such that the boxes in $L_i$ form oriented boundaries $\partial\Omega_i$ 
of oriented cubical sets $\Omega_i$ for $i=1, \ldots, k$.

In the same way as before, we define for $i=1,\ldots, k$ the sets $D(i)^j_{l',s'}$ to be oriented cubical sets such that 
$L_i$ is the disjoint sum $\sum_{j,l',s'}\,D(i)^j_{l',s'}$, the sign vectors have $l'$th component $s'$ on $D(i)^j_{l',s'}$
and the oriented boundaries $\partial D(i)_{l,s}^j$ are sub-lists of a splitting of $\mathit{faces}(L)$ 
such that each $x\in\partial D(i)^j_{l,s}$ is a sub-face of some $b\in D(i)^{j'}_{l',s'}$ for some $l'\neq l$.
Similarly as before, $\mathit{faces}$ is a equivalent to $\sum_{i,j}\,\partial D(i)^j_{l,s}$ and
\begin{eqnarray*}
&&{\rm Deg}({L})=s\, (-1)^{l+1}\,{\rm Deg}(\mathit{faces})= \,{\rm \,(Lemma\,\,\ref{equiv})\,} =\,s\, (-1)^{l+1}\,{\rm Deg}(\sum_{i,j}\partial D(i)^j_{l,s})=\\
&& {\rm \,(Induction, \,\,\ref{part_one}.)\,} = s\,(-1)^{l+1}\,\sum_{i,j} {\rm Deg}(\partial D(i)^j_{l,s})=\,\,{\rm (Induction,\,\, \ref{part_two}.)}\, \\ 
&& =s\,(-1)^{l+1}\,\sum_{i,j} deg({f}_{\neg l}, D(i)_{l,s}^j, 0)=s\,(-1)^{l+1}\,\sum_{i,j;\,l'=l\,\,{\rm and}\,\,s'=s} deg(f_{\neg l}, D(i)_{l',s'}^j,0)=\\
&&  \,\, {\rm (Equation \,(\ref{degreeformula}))}=\sum_i \deg(f,\Omega_i,0)=\,\,({\rm Statement\,\,2.})\,\,=\sum_i {\rm Deg}(L_i)
\end{eqnarray*}
which completes the proof.  
\end{proof}

From this proof it can be seen that our approach to degree computation is not
restricted to boxes, but works for general cubical sets: in
Item~\ref{part_two} of this proof, we showed that ${\rm Deg}(L)$ returns the degree
$deg(f,\Omega,0)$, if $L$ is a $d$-dimensional sufficient sign list wrt. a
function $f: |\Omega|\to \R^{d+1}$, such that the boxes in $L$ form an oriented
boundary of $\Omega$. So, for a function $f$ defined on a $(d+1)$-dimensional
cubical set $|\Omega|$ embedded in $\R^n$ s.t. $0\notin f(\partial |\Omega|)$,
we might algorithmically find a subdivision of the oriented boundary
$\partial\Omega$, create a sufficient sign list $L$ wrt. $f$ and run ${\rm Deg}(L)$.

\section{Experimental Results}
\label{sec:experimental-results}

We tested a simple implementation of the algorithm on several algebraic functions $f$ and boxes
$B$. All timings were measured running version 1.0 of the implementation on a PC with Intel Core i3 2.13
GHz CPU and 4GB RAM. Interval computations were done based on the library
smath~\cite{Hickey} implementing intervals with floating point endpoints and
conservative rounding. In theory it could happen that 64 bit floating point
representation does not suffice for computing a sufficient sign covering of $\partial B$, but
in our experiments we did not find a single example where this happened.

Unfortunately, up to the best of our knowledge, all
published articles on general degree computation algorithms, only contain
examples of low dimension, for which our algorithm tends to terminate with a
correct result in negligible time. Hence, in order to show the properties and
limitations of our algorithm, we chose different examples that scale to higher
dimensions.

The first part of the algorithm where boundary boxes are subdivided
and sign vectors are computed takes usually about 5-50 times less
than the combinatorial part where the degree is calculated from the list of boxes and sign vectors. 
However, if there is no solution of
$f(x)=0$ on $B$ (and the degree is zero),  then the second part terminates immediately, because 
---in the simplest case---there are no "selected boxes" at all.

In most cases, computation of $\deg(f,B,0)$ such that $0\in f(B)\setminus f(\partial B)$, terminated in reasonable time if
$\dim\, B\leq 10$.
If $0\notin f(B)$, then the degree is zero and the algorithm terminates very fast even in much higher dimensions.
\\
\\
{\it Example 1.} For the identity function on $[-1,1]^n$, the degree computation terminates even in high dimensions. The times are given in Figure \ref{fig_id}.
\begin{figure}[htbp]
  \centering
   \includegraphics{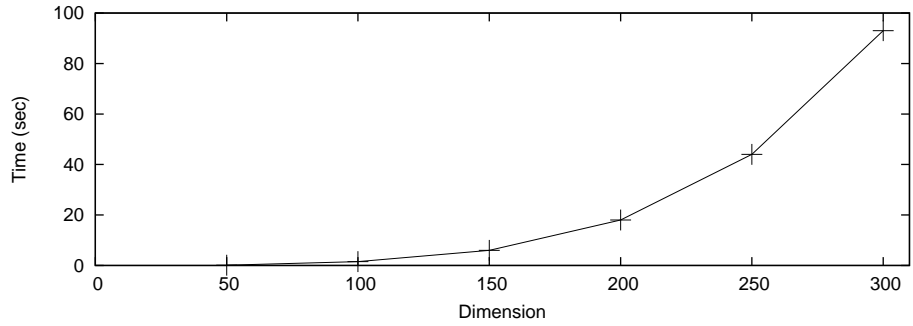}
   \caption{Time spent on calculating $\deg(id,[-1,1]^n,0)=1$.}
\label{fig_id}
\end{figure}
\\
\bigskip \\
{\it Example 2.}
We considered the function 
\begin{eqnarray*}
&& f_1=x_1^2-x_2^2-\ldots -x_n^2\\
&& f_2=2x_1x_2 \\
&& \ldots\\
&& f_n=2x_1 x_n.
\end{eqnarray*}
This function has a single root in $x=0$ of degree $2$ for $n$ even and $0$ for $n$ odd. 
Figure $\ref{fig_deg_two}$ shows the time consumed for calculating $\deg(f,B,0)$ for $B=[-1,1]^n$ and $B=[-0.001,1]^n$. 
The computation is significantly faster for $B=[-\epsilon,1]^n$ where $\epsilon>0$ is small and the root $0$ is close to the boundary. 
In this case, the subdivision of the boundary contains only two selected boxes (both close to $0$).
For~$B=[-\epsilon,\epsilon]^n$, the calculation takes about the same time as for $B=[-1,1]^n$.

\begin{figure}[htb]
   \includegraphics{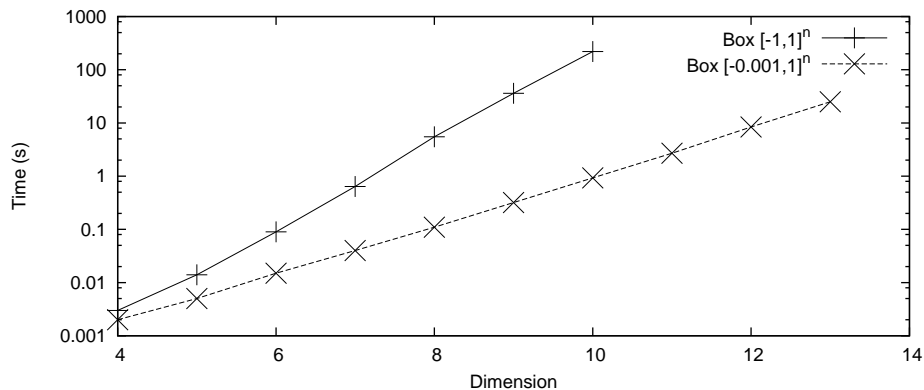}
   \caption{Time needed to calculate the degree $\deg(f,[-1,1]^n,0)$ and
     $\deg(f,[-\epsilon,1]^n,0)$ for $\epsilon=0.001$.}
\label{fig_deg_two}
\end{figure}

The following table shows the number of {\it selected} and {\it non-selected} boxes in the subdivision of $\partial B$ for $B=[-1,1]^n$.

{\begin{center}
\begin{tabular}{l | l | l}
Dim $B$\, & Selected boxes\, & Non-selected boxes\\
\hline
5 & 32 & 800\\
6 & 64 & 2368\\
7 & 128 & 6528 \\
8 & 256 & 17408 \\
9 & 512 & 44032 \\
10 & 1024 & 108544\\ 
\end{tabular}
\end{center}
}

If we chose the box to be $[\epsilon,1]^n$ or any other such that $0\notin f(B)$, the degree calculation terminates almost immediately even in dimension 1000 and more.\\
We also investigated the effect of the choice of $l$ and $s$ in the Deg function
body. By default, they are chosen so that the number of selected boxes is minimal.
Numerical experiments show that the computation takes more time if the number of selected boxes is larger. The following table shows the number of selected boxes for various $l$ and $s$
in dimension $6$.
\\
{
\begin{center}
\begin{tabular}{l  l | l || l l | l}
$l$ & $s$ & Nr. of boxes & $l$ & $s$ & Nr. of boxes \\
\hline
1 & + & 64 & 1 & - & 1600\\
2 & + & 64 & 2 & - & 64\\
3 & + & 64 & 3 & - & 64\\
4 & + & 64 & 4 & - & 64\\
5 & + & 96 & 5 & - & 96\\
6 & + & 96 & 6 & - & 96\\
\end{tabular}
\end{center}
Choosing the bad strategy choice $l=1$ and $s=-$ would increase the computation time significantly. 
The following table shows the time comparison.
{\begin{center}
\begin{tabular}{l | l | l}
Dimension & Optimal choice of $l$ and $s$ & Worst choice of $l$ and $s$\\
\hline
6 & 0.09 s & 0.8 s\\
7 & 0.65 s &  10 s\\
8 & 5.6 s &  175 s\\
\end{tabular}
\end{center}} 

In general, we made the observation that for a fixed choice of $l$
and $s$, for some permutations of variables the number of selected boxes, and
hence run-time, is much higher than for others. Hence, our strategy of choosing
$l$ and $s$ makes the run-time of the algorithm much more robust. \\
\bigskip \\
{\it Example 3.} We also tested the algorithm on the non-Lipschitz function 
$$\sqrt[3]{f}:=(\sqrt[3]{f_1}, \sqrt[3]{f_2},\ldots, \sqrt[3]{f_n}): [-1,1]^n\to\R^n$$
where $(f_1,\ldots, f_n)$ is the function from Example 2. 
The construction of the sign covering of the boundary
takes more time than in the previous example, because more interval computations are involved.
However, the sign covering of the boundary is identical to that in~Example~2, because for all intervals $[a,b]$ that occur in 
this computation, $[a,b]$ doesn't contain $0$ if and only if our implementation of $I(\sqrt[3]) ([a,b])$ doesn't contain $0$
and both intervals have the same sign.
So, the combinatorial part is identical to the previous example. 
We compare the running time of the numerical part of the computation for $f$ and $\sqrt[3]{f}$
in the following table~\footnote{Our implementation 
of $I(\sqrt[3])$ is based on the real number identity $\sqrt[3]{x}=\sign(x)\,\exp(\frac{1}{3}\ln{|x|})$. 
For the absolute value, logarithm and exponentiation we used the interval functions available in the smath library~\cite{Hickey}.}.
{\begin{center}
\begin{tabular}{l | l | l}
Dimension \,\,& \, Sign covering wrt. $f$ \,\, & \, Sign covering wrt. $\sqrt[3]{f}$\\
\hline
8 &  0.2 s &  5.1 s\\
9 &  1 s &  15.4 s\\
10 &  5.3 s &  49.5 s\\
\end{tabular}
\end{center}}



%

\appendix
\section{Proof of Lemma \ref{lemma:1}}
Let us adopt the notation $a\hr B$ for \emph{``$a$ is an oriented sub-face of $B$''} (see Def. \ref{boundary_orientation}). Let
$\partial\Omega$ be an oriented boundary of the oriented $d$-dimensional cubical set $\Omega$ and let 
$\Lambda_{l',s'}=\{a\in\partial\Omega\,|\,(SV_a)_{l'}=s'\}$.
$\partial\Omega$ is a disjoint union of the sets $\Lambda_{l',s'}$, $(l',s')\in \{1,\ldots,d\}\times\{+,-\}$.

For each $B\in\Omega$, let $\partial B$ be an oriented boundary of $\{B\}$ that contains all $a\in\partial\Omega$ such that $a\hr B$.
Such oriented boundary $\partial B$ can be constructed by completing $\{a\in\partial\Omega \,| \,a\hr B\}$ to a full oriented
boundary of $B$.
Further, for each $a\in \partial\Omega$, let $\partial a$ be an oriented boundary of $\{a\}$ such that for each $x\in\partial a$, 
the following condition is satisfied:
\begin{itemize}
\item for each $B\in\Omega$ and each $b\in\partial B$, either $x\subseteq b$ or $x\cap b$ is at most $(d-3)$-dimensional.
\end{itemize}
Such oriented boundary $\partial a$ can be constructed by splitting the boundary faces of
$a$ as long as some boundary face $x\in\partial a$ has nontrivial $(d-2)$-dimensional intersection with some $b\in\partial B$
for some $B\in\Omega$. 
Denote by $\partial\Lambda_{l,s}$ the set of all boxes $x\in\partial a$ s.t. $a\in\Lambda_{l,s}$ and $x$ is a sub-face 
(not necessary \emph{oriented} sub-face) of some $b\in \Lambda_{l',s'}$
for $(l',s')\neq (l,s)$. 
Finally, for any oriented box $a$, let $-a$ be the same box with opposite orientation. 

For all $a\in\partial\Omega$, $x\in\partial a$ and $B\in\Omega$, either $x\subseteq B$ (this is when $x\subseteq b$ for some $b\in\partial B$)
or $x\cap B$ is at most $(d-3)$-dimensional. If $x\subseteq b\in\partial B$, then there exist unique $b_1, b_2 \in\partial B$ such that $x\hr b_1$ and
$-x\hr b_2$ ($\partial B$ is an oriented cubical set with empty boundary).
Further, note that for each $b\in\partial B$, either $b\in\partial\Omega$ or $b$ has only lower-dimensional intersection with each 
element of $\partial\Omega$ (if $b$ had a $(d-1)$-dimensional intersection with $c\in\partial\Omega$ and $b\neq c$, 
then $c$ would be a~sub-face of $B$ due to the second condition of Def. \ref{def:oriented_boundary}
and $b,c\in\partial B$ would violate the first condition of Def. \ref{def:oriented_boundary}).

Let $l\in\{1,\ldots,d\}$ and $s\in\{+,-\}$.
We construct the sets $D_j$ and $\partial D_j$ inductively by associating the boxes in $\Lambda_{l,s}$ with sets $D_j$. 
Assume that $D_1,\partial D_1$, $\ldots,$ $D_{k-1}, \partial D_{k-1}$ satisfy the following conditions for all $1\leq i,j\leq k-1$:
\begin{itemize}
\item $D_i\subseteq \Lambda_{l,s}$ is an oriented cubical set
\item $D_i\cap D_j=\emptyset$ for $i\neq j$
\item $\partial D_j\subseteq \partial\Lambda_{l,s}$ is an oriented boundary of $D_j$.
\end{itemize}
Let $D_{k}\subseteq\Lambda_{l,s}$ be an oriented cubical set such that $D_k\cap D_i=\emptyset$ for $i<k$. 
Let $\partial D_k$ be an oriented boundary of $D_k$ s.t. $\partial D_k\subseteq \cup_{a\in D_k} \partial a$ 
(such a boundary exists, because $\partial a$ is subdivided fine enough).
If $\partial D_k\subseteq\partial\Lambda_{l,s}$, 
then condition $\ref{lm:4}.$ from the Lemma is satisfied for each $b\in \partial D_k$ and the construction of $D_k$ is completed. 
In such case, if $\cup_{i=1}^k D_i\neq \Lambda_{l,s}$, then we choose some $a\in \Lambda_{l,s}\setminus \cup_{i=1}^k D_i$ and defining
$a\in D_{k+1}$ we start the construction of a new set $D_{k+1}$.

Assume that $\partial D_k\nsubseteq\partial\Lambda_{l,s}$. Then 
there exists some $x\in\partial D_k$,  $x\notin\partial\Lambda_{l,s}$.  Because $x\in\partial D_k$, there exists exactly one 
$a_0\in D_k$ such that $x\hr a_0$ (Def. \ref{def:oriented_boundary}). 
The condition $x\notin\partial\Lambda_{l,s}$ implies that 
the intersection of $x$ with any $b\in\Lambda_{l',s'}$ for $(l',s')\neq (l,s)$ has dimension at most $d-3$.
We assumed that $a_0\in\partial \Omega$, so there exists a unique box $B_1\in \Omega$ such that $a_0\hr B_1$. 
Let us construct a sequence $a_0, a_1, \ldots, a_p$ and a sequence $B_1,\ldots, B_{p}\in\Omega$ of oriented boxes 
such that the following conditions are satisfied for $u=1,\ldots, p$:
\begin{itemize}
\item $x\hr a_{u-1}\hr B_u$ and $a_{u-1}\in\partial B_u$,
\item $(-x) \hr (-a_{u})\hr B_u$ and $(-a_u)\in\partial B_u$,
\item $B_u$ and $B_{u+1}$ have $(d-1)$-dimensional intersection with compatible orientations,
\item $(-a_p)\in\partial \Omega$.
\end{itemize}
The boxes $B_1$ and $a_0$ have been defined and $x\hr a_0\hr B_1$. Suppose that $x\hr a_{u-1} \hr B_u$.
Let $(-a_u)$ be the unique oriented box in  $\partial B_u$ s.t. $(-x)\hr (-a_{u})$.
If $(-a_u)\in\partial\Omega$, then $u=p$ and we are done. 
Otherwise, the intersection of $(-a_u)$ with each $b\in\partial \Omega$ is at most $d-2$ dimensional and
it follows from Definitions \ref{oriented_cubical_set}  and \ref{def:oriented_boundary} that $(-a_u)$ 
is a common sub-face of two boxes $B_u$ and $B_{u+1}$ in $\Omega$
with compatible orientations. This means that $(-a_u)\hr B_u$ and $a_u\hr B_{u+1}$, so $x\hr a_u\hr B_{u+1}$.
For all $u$, $x\hr a_{u}$, in particular $-x\hr (-a_p)$ and it follows that 
$a_0$ and $(-a_p)$ have compatible orientations. 
We add the box $(-a_p)$ to $D_k$. We will show that this does not violate any of the above assumptions and we 
redefine $\partial D_k$ so that it is an oriented boundary of $D_k$ and $\partial D_k\subseteq \cup_{a\in D_k} \partial a$.

First we show that the sequence $\{(a_u, B_u)\}_u$ terminates, i.e. it is not periodic. Assume that it is periodic and that
$(-a_u)\notin\partial\Omega$ for all $u>0$. Let $p$ be the smallest integer such that $(a_p, B_p)=(a_k, B_k)$ for some $k<p$.
There exists a unique $a_{p-1}$ s.t. $x\hr a_{p-1}\hr B_p$ and
exactly two boxes $B_{p-1}$ and $B_p$ in $\Omega$ containing $a_{p-1}$ as a sub-face, so $(a_{p-1}, B_{p-1})$ is uniquely
determined by $(a_p, B_p)$. If $k>1$, then this implies $(a_{k-1}, B_{k-1})=(a_{p-1}, B_{p-1})$, contradicting the assumption
that $p$ was the smallest such integer. If $k=1$, then $x\hr a_0=a_{p-1}\hr B_1=B_p$ and $a_0$ is a common sub-face of
two elements $B_p$ and $B_{p-1}\in\Omega$ which contradicts $a_0\in\partial\Omega$. We showed that the sequence $\{(a_u, B_u)\}_u$
terminates and we may add $(-a_p)$ to $D_k$. 

Now we show that adding $(-a_p)$ to $D_k$ doesn't violate any assumption of the construction of the sets $D_j$.
Note that $a_p\neq a_0$. If $a_p=a_0$, then we would have $x\hr a_0\hr B_1\in\Omega$ and $-a_0\hr B_p\in\Omega$. This implies that $B_1\neq B_p$,
$a_0\hr B_1$, $(-a_0)\hr B_p$, which contradicts 
and the assumption $a_0\in\partial\Omega$ (Def.~\ref{def:oriented_boundary}).
Further, if $(-a_p)\notin \Lambda_{l,s}$ then $(-a_p)\in\Lambda_{l',s'}$ for some $(l',s')\neq (l,s)$ and $x$ would be a $(d-2)$-dimensional
sub-face of $(-a_p)$, contradicting the assumption $x\notin\partial\Lambda_{l,s}$. 
This proves that $(-a_p)\in\Lambda_{l,s}$. The box $-a_p$ is not in $D_k$ yet, because $x$
is contained in both $-a_p$ and $a_0$ and we assumed that $x\in\partial D_k$.
Also, $(-a_p)$ is not contained in any $D_i$, $i<k$. If $(-a_p)\in D_i$ for $i<k$, then $a_0$ would be added to $D_i$ before, 
constructing the sequence $(-a_p), (-{a}_{p-1}),\ldots, (-{a}_1), (-a_0)$ where $(-x)\hr (-{a}_{v})\hr B_{v}$
and $x\hr {a}_{v-1}\hr B_{v}$ for all $v=p,\ldots, 2,1$. At the end of this sequence, $a_0=-(-a_0)\in\partial\Omega$ would be included into $D_i$, contradicting 
our starting assumption $D_i\cap D_k=\emptyset$. So, adding $(-a_p)$ to $D_k$ doesn't violate
any assumption of the construction. 

Each $x\in\partial D_j$ is a sub-box of some $b\in\Lambda_{l',s'}$ for $(l',s')\neq (l,s)$. 
However, the case $(l',s')=(l,-s)$ is impossible, because $f_l$ cannot have sign 
$s$ on $|D_j|$ and $-s$ on $x\subseteq |D_j|$. So, $l'\neq l$.
In this way, we construct the oriented cubical sets $D_j$ such that $\cup D_j=\Lambda_{l,s}$. 
This can be done for each $l$ and $s$ and the resulting sets $\{D_j\}_j$ satisfy all the requirements. $\square$

\section{Proof of Lemma \ref{lemma:star}}
Assume that $L$ is a balanced $d$-dimensional sufficient sign list wrt. $f$. First we define some additional notation.
We say that an oriented $(d-1)$-box $e$ is \emph{small wrt. $L$}, if for each $F\in L$, either $e\subseteq F$ or $e\cap F$ is at most $(d-2)$-dimensional,
where $e$ and $F$ represent the boxes, without considering the orientation.
Furthermore, we fix the notation $a\hr B$ for \emph{``$a$ is an oriented sub-face of $B$''} 
(with the induced orientation, see Def. \ref{boundary_orientation}) as in~the proof of Lemma \ref{lemma:1},
and the notation $a\subseteq_o b$ for
\emph{``$a$ is an oriented sub-box of $b$''} (Def. \ref{def:oriented_box}). Further, let us represent the
list $L$ as a \emph{set} of pairs $L\simeq \{(E_1,1),(E_2,2),\ldots, (E_{|L|},|L|)\}$, where $E_i$ is the $i$-th element of $L$. 

Let $$\mathcal{A}=(\{e\,|\,\exists (E,i)\in L\,\,e\hr E \,\,{\rm and}\,\,e\,\,{\rm is\,\,small\,\,wrt.}\,\, L\}, \subseteq_o)$$
be a partially ordered set. 
If $L\neq\emptyset$ then $\mathcal{A}\neq\emptyset$, because each oriented sub-face $e$ of $E\in L$
can be refined to small oriented sub-boxes wrt. $L$. Let $\mathcal{M}$ be the set of maximal elements in $\mathcal{A}$. 
These are exactly the elements that are an intersection of a face $\partial$ of some $E\in L$ with a maximal number of boxes 
in $L$ s.t. the intersection is still $(d-1)$-dimensional.  
It follows that $\mathcal{M}$ is finite. Moreover, each $e\in\mathcal{A}$ is an oriented sub-box of a unique element $e'$ in $\mathcal{M}$. We define the equivalence class $[e]$
of some $e\in\mathcal{A}$ to be the set $\{g\in\mathcal{A}\,|\,g\subseteq_o e'\}$ for the unique $e'\in\mathcal{M}$.
For $e\in\mathcal{A}$, let $S_e$ be the subset of $L$ containing all $(E,i)\in L$ such that $e\hr E$.
If $e\subseteq_o e'\in\mathcal{M}$, then $S_e=S_{e'}$, so we may define the set $S_{[e]}$ for the equivalence class $[e]$. 
The balance property says that for each $e\in\mathcal{A}$, we have $|S_{[e]}|=|S_{[-e]}|$.
For each $e\in\mathcal{M}$, define the bijection $P_{[e]}: S_{[e]}\to S_{[-e]}$ in such a way that 
$P_{[-e]}=P_{[e]}^{-1}$ for all $e\in\mathcal{A}$.

Let $l\in \{1,\ldots,d+1\}$ and $s\in\{+,-\}$, $L^{sel}$ be the subset of $L$ containing all $(E,i)$
s.t. $(SV_E)_l=s$, let $L-L^{sel}$ be the set of non-selected boxes 
and assume that
$\mathit{faces}=\sum_{E\in\,L^{sel}}\,\mathit{faces}(E)$ was created in the Deg function body. 
We will represent $\mathit{faces}$ as a set of elements $(e,(E,i))$ such that $e\in \mathit{faces}((E,i))$ was created
as an oriented sub-face of $(E,i)\in L^{sel}$ in the Deg function body. Note that
for a particular $(E,i)\in L$, $e\hr E$ cannot be contained more than once in the list $\mathit{faces}((E,i))$, so 
$\mathit{faces}(E,i)$ contains each of its element exactly once, and hence each $(e, (E,i))$ represents a unique element of the $\mathit{faces}$ list. 
In~this set representation of $\mathit{faces}$, we ignore the order of the list. Note that the balancedness of $\mathit{faces}$, that we
want to prove, does not depend on the order of $\mathit{faces}$.

Let $(e,(E,i))\in \mathit{faces}$ and $x\hr e$ be so that $e$ is small wrt. $\mathit{faces}$ (this means that for each $(g,(E,i))\in \mathit{faces}$, 
either $x\subseteq g$ or $x\cap g$ is at most $(d-3)$-dimensional).
Let $T_x$ be the subset of $\mathit{faces}$ containing all $(g,(E,i))\in\,\mathit{faces}$ s.t. $x\hr g$. We want to show that $|T_x|=|T_{-x}|$. 
Let $x'\subseteq_o x$ be so small that for each $(E,i)\in L$, either $x\subseteq E$ or $x\cap E$ is at most $(d-3)$-dimensional 
(such a sub-box exists, because
it may be constructed as an intersection of $x$ with a finite number of boxes from $L$). 
$T_x=T_{y}$ holds for any oriented sub-box $y$ of $x$, so it is sufficient to show $|T_{x'}|=|T_{-x'}|$.
To prove this, we will construct a bijection $R_x:\,T_{x'}\to T_{-x'}$.

Let $(e_0, (E_0,i_0)) \in T_x$. This means that $e_0\in \mathit{faces}((E_0,i_0))$ for some $(E_0,i_0)\in L^{sel}$ and $x\hr e_0$.
Let $e_1$ be another sub-face of $E_0$ s.t. $x'\subseteq e_0\cap e_1$ and $e_1$ is small wrt. $L$ 
(such $e_1$ exists because of the condition that $x'$ is small wrt. $L$).
The sub-faces $e_0$ and $e_1$ of $E_0$ are oriented compatibly, so $(-x')\hr e_1$ and $e_1\in\mathcal{A}$. $E_0$ has up to equivalence only two sub-faces $e_0, e_1\in\mathcal{A}$ 
containing $x'$ so $[e_1]$ is uniquely determined by $x$ and $(e_0, (E_0, i_0))$.
If there exists some $(F,j)\in L^{non}$ s.t. $e_1\subseteq F$, then $e_1\subseteq_o e_1'\hr E_0$ for some $e_1'$ such that 
$(e_1', (E_0,i_0))\in \mathit{faces}$ and $(e_1', (E_0, i_0))\in T_{-x}$.
In that case, we define $R_x((e_0, (E_0, i_0))):=(e_1', (E_0, i_0))$. 
Otherwise, $e_1$ is not a subset of any non-selected box, and for each $(F,i)\in L$, $e_1\hr F$ implies $(F,i)\in L^{sel}$. 
Let $(E_1, i_1):=P_{[e_1]}((E_0, i_0))$. 
We know that $(E_1, i_1)\in L^{sel}$ and $x'\hr -e_1\hr E_1$. We again find a box $e_2$ in $E_1$ such that the intersection $e_1 \cap e_2$ contains $x'$ and $(-e_1)$ and $e_2$
are oriented compatibly,  so $-x'\hr e_2\hr E_1$.
In this way, we construct a sequence of boxes $e_j$ and elements $(E_j, i_j)$ such that $-x'\hr e_{j+1}\hr E_j$ for $j\geq 0$, 
$-e_j\hr E_j$ for $j\geq 1$, all $e_j$ are small wrt. $L$, $P_{[e_j]}((E_{j-1}, i_{j-1}))=(E_j, i_j)$ and
$e_j$ is not a subset of any non-selected box for $j=1,\ldots, p$. If $e_{p+1}$ is a subset of some non-selected box,
then $e_p\subseteq e_{p+1}'\in \mathit{faces}((E_p, i_p))$ and we define $R_x((e_0,(E_0, i_0)):=(e_{p+1}', (E_p, i_p))$.

It remains to prove that $R_x$ is correctly defined, i.e. that for some finite $p\in\N$, $e_{p+1}$ will be a subset of some non-selected box, 
and that $R_x$ is a bijection.
First we show that this procedure terminates. Assume, for contradiction, that the sequence $\{[e_j],(E_j, i_j)\}_j$ is infinite. 
Because there exists only a finite number of $(E_j,i_j)\in L$ and only a finite number of $[e_j]$, the sequence is periodic.
Let $k$ be the minimal index such that $([e_k], (E_k, i_k))=([e_{l}], (E_l, i_l))$ for some $l<k$. 
If $l>0$, then $(E_l,e_l)=P_{[e_{l}]}((E_{l-1}, i_{l-1}))$ and $(E_{l-1},i_{l-1})=P_{[-e_{l}]}((E_{l}, i_{l}))$ due to the assumption
$P_{[e_l]}=P_{[-e_l]}^{-1}$. It follows that $E_{l-1}$ is uniquely determined by $([e_l], (E_l,i_l))$ and 
$(E_{l-1},i_{l-1})=(E_{k-1},i_{k-1})$.  From the construction, we know that $-x'\hr e_{l}\hr E_{l-1}$. However,
in $E_{l-1}$, there exists up to equivalence a unique $-e_{l-1}\hr E_{l-1}$ s.t. $x'\hr (-e_{l-1})\hr E_{l-1}$. 
So, we proved that $([e_{l-1}], (E_{l-1}, i_{l-1}))=([e_{k-1}], (E_{k-1}, i_{k-1}))$, contradicting the assumption that
$k$ was the minimal index with such equality. If $l=0$ and $[e_k]=[e_0]$,
then the fact that $e_0$ is a subset of some non-selected box contradicts the assumption
that for each $i>0$, $e_i$ is not a subset of any non-selected box.

Finally, note that if $R_x(e_0, (E_0, i_0))=(e_{p+1},(E_{p}, i_p)$, then $R_{-x}(e_{p+1},(E_{p}, i_p)=(e_0, (E_0, i_0))$,
because each $([e_j], (E_j,i_j))$ is uniquely determined by $[e_{j+1}]$ and $(E_{j+1}, i_{j+1})$. 
So, starting with $(e_{p+1}, (E_p, i_p))$ will just reverse the order and we will eventually come to some $\tilde{e_0}$ s.t.
$\tilde{e_0}$ is a sub-face of $(E_0, i_0)$, $\tilde{e_0}$ is small wrt. $\mathit{faces}$ and $\tilde{e_0}$ is a subset of some non-selected box. It follows that 
$\tilde{e_0}$ is an oriented sub-box of the unique $(e_0,(E_0, i_0))\in \mathit{faces}$. This proves that $R_{-x}=R_x^{-1}$ and that $R$ is a bijection.
$\square$
%

\bibliographystyle{abbrv}
\bibliography{sratscha}

\end{document}